\documentclass[submission,copyright,creativecommons]{eptcs}
\usepackage{breakurl}             
\usepackage{underscore}           
\usepackage{amsmath}
\usepackage{amssymb}
\usepackage{proof}
\usepackage{amsthm}

\newtheorem{theorem}{Theorem}
\newtheorem{lemma}[theorem]{Lemma}
\newtheorem{proposition}[theorem]{Proposition}
\theoremstyle{definition}
\newtheorem{definition}[theorem]{Definition}
\newtheorem{example}[theorem]{Example}

\newcommand{\Nat}{{\mathbb N}}

\usepackage[dvipsnames]{xcolor}

\newcommand{\Set}{\mathsf{Set}}
\newcommand{\ttt}{\mathsf{tt}}
\newcommand{\Bool}{2} 
\newcommand{\truebool}{\mathsf{true}}
\newcommand{\falsebool}{\mathsf{false}}
\newcommand{\leftbool}{\mathsf{left}}
\newcommand{\rightbool}{\mathsf{right}}

\newcommand{\leaf}[1]{\mathsf{leaf}\;#1}
\newcommand{\node}[2]{\mathsf{node}\;#1\;#2}
\newcommand{\Tree}[2]{\mathsf{Tree}\;#1\;#2}
\newcommand{\TreeAgdaInd}[3]{\mathsf{Tree}\;#1\;#2\;#3}
\newcommand{\TreeAgdaCoind}[3]{\mathsf{Tree'}\;#1\;#2\;#3}
\newcommand{\Size}{\mathsf{Size}}
\newcommand{\SizeLt}[1]{\mathsf{Size}{<}\;#1}
\newcommand{\infPath}{\mathsf{diverge}}
\newcommand{\force}{\mathsf{force}}
\newcommand{\skipo}{\mathsf{sk}}
\newcommand{\mapTree}[2]{\mathsf{mapTree}\;#1\;#2}
\newcommand{\liftTree}[3]{{#1}^{\alpha}\;#2\;#3}
\newcommand{\coliftTree}[3]{{#1}^{\beta}\;#2\;#3}
\newcommand{\leafS}{\mathsf{leaf}}
\newcommand{\nodeS}{\mathsf{node}}
\newcommand{\TreeS}{\mathsf{Tree}}
\newcommand{\liftTreeS}[1]{{#1}^{\alpha}}
\newcommand{\coliftTreeS}[1]{{#1}^{\beta}}

\newcommand{\Test}[1]{\mathsf{Test}\;#1}
\newcommand{\atom}[1]{\mathsf{atom}\;#1}
\newcommand{\true}{\mathsf{True}}
\newcommand{\false}{\mathsf{False}}
\newcommand{\liftTest}[2]{\overline{\mathsf{Test}}\;#1\;#2}
\newcommand{\dualTest}[1]{\mathsf{dualTest}\;#1}
\newcommand{\mapTest}[2]{\mathsf{mapTest}\;#1\;#2}
\newcommand{\liftTestS}{\overline{\mathsf{Test}}}
\newcommand{\dualTestS}{\mathsf{dualTest}}
\newcommand{\mapTestS}{\mathsf{mapTest}}
\newcommand{\TestS}{\mathsf{Test}}

\newcommand{\coinductive}{\mathsf{coinductive}}
\newcommand{\data}{\mathsf{data}}
\newcommand{\record}{\mathsf{record}}
\newcommand{\field}{\mathsf{field}}
\newcommand{\where}{\mathsf{where}}


\newcommand{\leafalpha}[2]{\mathsf{leaf}_\alpha\;#1\;#2}
\newcommand{\nodealpha}[1]{\mathsf{node}_\alpha\;#1}
\newcommand{\pileaf}[1]{\pi_{\mathsf{leaf}}\;#1}
\newcommand{\pinode}[2]{\pi_{\mathsf{node}}\;#1\;#2}
\newcommand{\leafbeta}[1]{\mathsf{leaf}_\beta\;#1}
\newcommand{\exepbeta}[1]{\mathsf{exep}_\beta\;#1}
\newcommand{\nodebeta}[1]{\mathsf{node}_\beta\;#1}
\newcommand{\pid}{\pi_{\mathsf{d}}}

\newcommand{\nodealphaS}{\mathsf{node}_\alpha}
\newcommand{\pileafS}{\pi_{\mathsf{leaf}}}
\newcommand{\pinodeS}{\pi_{\mathsf{node}}}
\newcommand{\leafbetaS}{\mathsf{leaf}_\beta}
\newcommand{\exepbetaS}{\mathsf{exep}_\beta}
\newcommand{\nodebetaS}{\mathsf{node}_\beta}

\newcommand{\Or}{\mathsf{or}}
\newcommand{\lookup}{\mathsf{lookup}}
\newcommand{\update}{\mathsf{update}}
\newcommand{\inp}{\mathsf{input}}

\newcommand{\may}{\mathsf{may}}
\newcommand{\must}{\mathsf{must}}

\newcommand{\N}{\mathsf{N}}
\newcommand{\tto}{\Rightarrow}
\newcommand{\U}{\mathsf{U}}
\newcommand{\Ty}{\mathsf{Ty}}
\newcommand{\Val}{\mathsf{Val}}
\newcommand{\Cpt}{\mathsf{Cpt}}
\newcommand{\val}{\mathsf{val}}
\newcommand{\cpt}{\mathsf{cpt}}
\newcommand{\Tm}{\mathsf{Tm}}
\newcommand{\Formo}[2]{\mathsf{Fma}\;#1\;#2}
\newcommand{\eq}{\mathsf{eq}}
\newcommand{\neqF}{\mathsf{neq}}
\newcommand{\lef}{\mathsf{fst}}
\newcommand{\rig}{\mathsf{snd}}
\newcommand{\thunk}{\mathsf{thunk}}
\newcommand{\obsalpha}{\mathsf{obs}_\alpha}
\newcommand{\obsbeta}{\mathsf{obs}_\beta}
\newcommand{\test}{\mathsf{test}}
\newcommand{\negForm}[1]{\mathsf{negFma}\;#1}
\newcommand{\negFormS}{\mathsf{negFma}}

\title{Inductive and Coinductive Predicate Liftings for Effectful Programs}
\author{Niccol\`{o} Veltri \thanks{The authors were supported by the ESF funded Estonian IT Academy research measure (project 2014-2020.4.05.19-0001) and the Estonian Research Council grant PSG659.}
	\quad \qquad Niels F.W. Voorneveld \footnotemark[1]
\institute{Tallinn University of Technology\\ Tallinn, Estonia}
\email{niccolo@cs.ioc.ee \quad \qquad niels@cs.ioc.ee}
}

\begin{document}
\maketitle

\begin{abstract}
	We formulate a framework for describing behaviour of effectful higher-order recursive programs. Examples of effects are implemented using effect operations, and include: execution cost, nondeterminism, global store and interaction with a user. The denotational semantics of a program is given by a coinductive tree in a monad, which combines potential return values of the program in terms of effect operations. 
	
	Using a simple test logic, we construct two sorts of predicate liftings, which lift predicates on a result type to predicates on computations that produce results of that type, each capturing a facet of program behaviour. Firstly, we study inductive predicate liftings which can be used to describe effectful notions of total correctness. Secondly, we study coinductive predicate liftings, which describe effectful notions of partial correctness. The two constructions are dual in the sense that one can be used to disprove the other. 
	
	The predicate liftings are used as a basis for an endogenous logic of behavioural properties for higher-order programs. The program logic has a derivable notion of negation, arising from the duality of the two sorts of predicate liftings,
	and it generates a program equivalence which subsumes a notion of bisimilarity. Appropriate definitions of inductive and coinductive predicate liftings are given for a multitude of effect examples.
	
	The whole development has been fully formalized in the Agda
	proof assistant.
	
\end{abstract}

\section{Introduction}
\label{sec:intro}

Programs may exhibit different behaviour depending on various circumstances. The environment can induce an effect upon program evaluation in many ways, e.g. nondeterministic decision making, access to some global store, or evaluation timeouts.
Such effects can be represented as \emph{algebraic effect operations} \cite{effect}, which trigger an effect and capture all possible continuations. 

General recursive computations with effect triggering operations can be seen as generating evaluation traces in the form of \emph{coinductive trees}. Such trees have as leaves the potential results a computation can return, and as nodes effect operations which branch into each of the possible continuations. Due to the coinductive nature of these trees, they can be of infinite height, denoting potentially diverging computations.

In practice, when a program is evaluated, each of the effect operations is handled by choices made by the environment. An external observer, e.g. a user of a program, cannot directly inspect the generated tree. However, this spectator may make certain \emph{observations} capturing effectful behaviour \cite{observation}. Such observations include: possible termination, evaluation within a time limit, and how a program alters a global state.

We capture such observations with a collection of predicate liftings \cite{modal-journal,MatacheS19}. These consist of a set of tokens denoting each possible observation, and for each token a device that lifts a predicate on return values to a predicate on coinductive trees over these values. Using a simple test logic, these predicate liftings are described ``locally'' in terms of effect operations. The local definitions generate an inductive predicate lifting, which captures a notion of effectful termination, and is used to specify total correctness with respect to the appropriate observation. The predicate liftings are similar to the ones derived from ordered monads by Hasuo~\cite{HasuoGeneric}.

Dually, we also generate coinductive predicate liftings. These allow computations to satisfy observations either immediately, or by postponing the burden of proof, potentially indefinitely. As such, they capture a notion of effectful divergence, and can be used to specify partial correctness with respect to effect observations. 

A main contribution of this paper is formulating the coinductive predicate liftings in such a way that they can be used as a constructive notion of negation for the inductive predicates. As a result, these coinductive predicate liftings can be used to disprove inductive properties, and vice versa.
This complementing pair hence helps us recover in our constructive setting some of the expressive power available when reasoning classically as in \cite{modal-journal}.

From the inductive and coinductive predicate liftings, we generate a logic for higher-order programs, extending the logic from \cite{modal-journal} with coinductive predicate liftings (modalities). This logic captures behavioural properties (observations) for program denotations, and is \emph{endogenous} in the sense of Pnueli \cite{Pnueli77}.
It is possible to build nested statements, mixing inductive and coinductive properties.
The logic gives rise to a notion of program equivalence on program denotations, which subsumes a notion of applicative bisimilarity in the sense of \cite{Abramsky90,Relational}.

We use predicate liftings and the resulting logic to study programs exhibiting various different effects, including: evaluation cost, nondeterminism, global store, and user input. The adaptive nature of the generic formulations allows for many more examples to be implemented. As such, we believe it gives a foundation for the verification of effectful programs in many situations.

In particular, we study the preservation of the logical program equivalence over certain program transformations. One such transformation prevalent for higher-order programs is that of \emph{sequencing}, as used e.g. by \textsf{let}-binding, program composition and the uncurrying operation.
We show that, for our effect examples, a property for a sequenced program can be \emph{decomposed} into (or pulled back to) a property for unsequenced programs. This adapts the notion of \emph{decomposability} featured in \cite{modal-journal}, and gives us a variety of proof techniques for the verification of higher order programs.

The development in this paper has been fully formalized in the Agda proof assistant. The code is freely accessible at: \url{https://github.com/niccoloveltri/ind-coind-pred-lifts}. It uses Agda 2.6.1 with standard library 1.6-dev. One consequence of this formalization is that all the behavioural properties described by our predicate liftings and logic are fully \emph{constructive}.
Programming languages that can be deeply embedded in Agda can be studied using this logic too.
For instance, we have an implementation of an effectful version of fine-grained call-by-value PCF \cite{PCF,fine-grained} with its own variation on the logic similar to the one used in \cite{modal-journal}.

\textbf{Basic Type-Theoretic Definitions and Notation.}
Our work takes place in Martin-L{\"o}f type theory extended with inductive and coinductive types, and practically in the Agda proof assistant.
We use Agda notation for
dependent function types: $(x :A) \to B\;x$. We write $0$ for the
empty type and $1$ for the unit type with unique inhabitant $\ttt : 1$.
We use $\Bool$ for the type of Booleans with two inhabitants
$\truebool,\falsebool: \Bool$. We write $\{ x \}$ as a synonym for $1$
when we want to use a specific name $x$ instead of $\ttt$. Similarly, we write
$\{x,y\}$ for the synonym of $\Bool$ where $\truebool$ and
$\falsebool$ are replaced by $x$ and $y$.
We use $A^*$ for the type of finite lists of elements of type $A$.
Propositional equality is ${\equiv}$ and judgemental equality is ${=}$ (as in Agda).
Types are stratified in a cumulative hierarchy of universes
$\Set_k$. The first universe is simply denoted $\Set$ and when we
write statements like ``$X$ is a type'', we mean $X : \Set_k$ for some
universe level $k$. For readability reasons, in the paper all mentions
of universe levels have been removed. We often employ the proposition-as-types perspective, and write statements like ``$X$ implies $Y$'', which formally expresses the existence of a function from type $X$ to type $Y$, or ``$X$ holds'', meaning that there exists an element of type $X$.

\section{Programs as Trees}
\label{sec:programs}

Assume given a type $K$ and a type family $I : K \to \Set$. The pair
$(K,I)$ is frequently called a \emph{container}~\cite{AAG:concsp}, and
it is used as a \emph{signature}~\cite{monad} to specify branching of trees in type theory. 
\begin{definition}
	The type of
	\emph{coinductive trees} is generated by two constructors:
	\begin{equation}
	\hfill
	\label{eq:trees}
	\small
	\infer{\leaf x : \Tree I A}
	{x : A}
	\qquad
	\infer={\node k {ts} : \Tree I A}
	{k : K & ts : I\;k \to \Tree I A}
	\hfill \qquad
	\end{equation}
\end{definition}
A coinductive tree $t : \Tree I A$ can either be of the form $\leaf
x$, for $x$ a value of type $A$, or $\node{k}{ts}$, where $ts$
represents the immediate subtrees of $t$. The branching is specified
by the type $I\;k$. 
Following Leroy and Grall's notation \cite{LeroyG09}, we use 
the double line in the inference rule of
$\nodeS$ to indicate that this is a coinductive constructor, and
as such it can be iterated an infinite amount of times. For example,
assuming $\skipo : K$ with $I \;\skipo = 1$, the corecursive definition
of the infinite tree $\infPath : \Tree I A$ with exactly one branch at each level
goes as follows: $\infPath = \node {\skipo} {(\lambda x.\;\infPath)}$.

In Agda, coinductive types are encoded as \emph{coinductive records}
which can optionally be parametrized by a \emph{size} to ease the
productivity checking of corecursive
definitions~\cite{APTS:coppis,Dan:upttus}. For example, the type of
trees given above is implemented in Agda as the following pair of
mutually defined types:
\begin{equation}\label{eq:agdatrees}\def\arraystretch{0.8}{\small
	\!\!
	\begin{array}{l}
	\data \;\TreeAgdaInd {(I : K \to \Set)} {(A : \Set)} {(i : \Size)} : \Set \;\where\\
	\quad \leafS : (x : A) \to \TreeAgdaInd I A i \\
	\quad \nodeS : (k : K) \, (ts : I \,k \to \TreeAgdaCoind I A i) \to \TreeAgdaInd I A i
	\end{array} \
	\begin{array}{l}
	\record \; \TreeAgdaCoind  {(I : K \!\to\! \Set)} {(A : \Set)} {(i : \Size)} : \Set \;\where \\
	\quad  \coinductive \\
	\quad \field \\
	\quad \quad \force : \{j : \SizeLt i\} \to \TreeAgdaInd I A j
	\end{array}}
\end{equation}
The type $\TreeAgdaInd I A i$ is inductive, while $\TreeAgdaCoind I A
i$ is a coinductive record type. The inductive constructors
$\leafS$ and $\nodeS$ are similar to the inference rules in
(\ref{eq:trees}), but now the subtrees $ts$ in $\node{k}{ts}$ have
return type $\TreeAgdaCoind{I}{A}{i}$. Elements of the latter type are
like thunked computations that can be resumed by feeding a token, in
the form of a size $j$ smaller than $i$, to the destructor
$\force$. The resumed subtree inhabits the type
$\TreeAgdaInd{I}{A}{j}$. The decrease in size can be explained by
viewing sizes as abstract ordinals representing the number of
unfoldings that a coinductive definitions can undergo. Sizes come with
a top element $\infty : \Size$, and a fully formed coinductive tree
$t$ has type $\TreeAgdaInd I A \infty$. This means that a 
user has an infinite number of tokens that she can spend for accessing
arbitrarily deep subtrees of $t$. The presence of sizes in the types
of coinductive records is crucial for ensuring productivity of
corecursively defined functions, which would otherwise need to pass
Agda's strict syntactic productivity check. 

In this paper, we opt to
work with an informal treatment of coinductive types, specified by
constructors as in (\ref{eq:trees}). Accordingly, corecursive
functions are given as usual in terms of recursively defined functions. In
particular, all mentions to sizes in types have been dropped, but the
interested reader can recover them in our Agda formalization.

The $\TreeS\; I$ datatype is a functor, and we call $\mapTree$ its
action on functions. It is also a monad, with unit $\eta = \leafS$
and multiplication $\mu : \Tree I {(\Tree I A)} \to \Tree I A$
corecursively defined as: 
$$\mu\;(\leaf x) = x, \quad \mu\;(\node k {ts})
= \node k {(\lambda x.\; \mu\;(ts\;x))}.$$

In our programs-as-trees perspective, elements in a leaf position
represent return values of computations, while each node in a tree
denotes the presence of an effect operation. $K$ is the type of
admissible operations and, for each $k : K$, $I \;k$ is a type
corresponding to the arity of operation $k$. The existence of a
\emph{skip} operation $\skipo : K$ with $I\;\skipo = 1$ enables the
encoding of possibly non-terminating computations, as done in \cite{Capretta05}, like the diverging program
$\infPath$ introduced previously in this section. If $K$ is equivalent to
the unit type and $\skipo$ is its unique inhabitant, then $\Tree I A$
corresponds to the type of deterministic possibly non-terminating
programs returning values of type $A$ after some number of steps.

\section{Observations}
\label{sec:observations}

As discussed in the last paragraph of the previous section, a user of
a program represented by a tree in $\Tree I A$, with only one
admissible skip operation $\skipo : K$, is able to observe the
terminating and non-terminating behaviour of the program and its
computation time. A user of a program represented by a nondeterministic binary tree may
observe that some of the possible return values in the tree satisfy a
certain property, or that none of them do. In the case of computer
programs exhibiting a variety of different computational effects, the
user can observe other relevant behaviour and therefore make appropriate
queries to the system.

The collection of admissible queries on effectful programs in $\Tree I
A$, capturing observations of effectful behaviour, is given by a
particular class of tree predicates, generated by a
predicate tree lifting. 
We introduce an inductive grammar of
logical statements, whose elements we call \emph{tests}, to define these predicate liftings.
\begin{center}
	$ 
	\begin{array}{c}
	\infer{\atom a : \Test A}{a : A}
	\qquad
	\infer{\true : \Test A}{}
	\qquad
	\infer{\false : \Test A}{}
	\end{array}
	$
\end{center}
\begin{center}
	$
	\begin{array}{c}
	\infer{t \wedge u : \Test A}{t,u : \Test A}
	\qquad
	\infer{t \vee u : \Test A}{t,u : \Test A}
	\qquad
	\infer{\bigwedge t : \Test A}{t : \Nat \to \Test A}
	\qquad
	\infer{\bigvee t : \Test A}{t : \Nat \to \Test A}
	\end{array}
	$ 
\end{center}

Elements of $\Test{A}$ are interpreted as types via the 
lifting $\liftTestS : (A \to \Set) \to \Test A \to
\Set$. Atoms, i.e. tests of the form $\atom a$, are modelled as
$P\;a$ where $P : A \to \Set$ is the input predicate. The other
constructors of $\Test{A}$ are modelled via Curry-Howard
correspondence. E.g. conjunction of tests $t$ and $u$ is interpreted
as the Cartesian product of the interpretations of $t$ and $u$, and
similarly for all the other logical operations. The type former
$\TestS$ is a functor, we call $\mapTestS$ its action
on functions.

Observations on coinductive trees are formulated using a threefold
specification:
\begin{enumerate}
	\item A type $O$ of tokens, each denoting a particular observation
	or test on an effectful program.
	\item A decidable subset $\pileafS : O \to \Bool$ specifying which
	observations are satisfied by the production of a successful
	result. This is called the \emph{leaf function} for $O$.
	\item A function $\pinodeS : (k : K) \to O \to  \Test {(I \;k \times
		O)}$ formulating when a tree with a root node satisfies an
	observation. This is called the \emph{node function} for $O$. Note
	that the constructors of the grammar $\TestS$ allow for both
	parallel and sequenced testing.
\end{enumerate}

To each observation $o : O$ and tree $t : \Tree I \Set$, which we
think of as obtained from a tree in $\Tree I A$ after an application of a
predicate $P : A \to \Set$ to its leaves, we wish to associate a type
$\alpha\;o\;t : \Set$ corresponding to the \emph{handling} of $t$ \cite{Handlers}. The result should be a particular logical combination of the
truth values from $t$ dictated by the input observation $o$.
This is done in the following definition:
\begin{definition}
	Given $\pileaf : O \to 2$ and $\pinode : (l : K) \to O \to \Test{I \ k \times O}$, we define the $O$-indexed algebra $\alpha : O \to \Tree I \Set \to \Set$ inductively as follows:
	
	\begin{center}
		$
		\infer{\leafalpha b p : \alpha\;o \;(\leaf P)}
		{b : \pileaf o \equiv \truebool & \quad p : P}
		\qquad \qquad
		\infer{\nodealpha \textit{ps} : \alpha\;o \;(\node k \textit{Ps})}
		{\textit{ps} : \liftTest {(\lambda (x,o').\;\alpha\;o'\;(\textit{Ps}\;x))} {(\pinode k o)} }
		$ 
	\end{center}
\end{definition}
The satisfiability of the predicate $\alpha\;o$ by a tree $t : \Tree I \Set$
depends on the shape of $t$.  If $t = \leaf P$, for some $P : \Set$, then
it satisfies $\alpha\;o$ in case $P$ holds and the token $o$ is
correct according to the leaf function $\pileafS$. If $t = \node k
{Ps}$, for some $k : K$ and $Ps : I\;k \to \Tree I \Set$, then the
test $\pinode{k}{o}$ specifies a combination of continuations $x$ and
new observations $o'$. If subtrees $Ps\;x$ satisfy $\alpha\;o'$,
according to the logical formula corresponding to $\pinode{k}{o}$, then also $t$ satisfies $\alpha\;o$. The use of the inductively-defined grammar of logical statements $\TestS$ and its interpretation in types $\liftTestS$ in the premise of $\nodealphaS$ guarantees that $\alpha$ is a well-defined inductive type family and passes Agda strict positivity check.

\begin{definition}\label{def:ind-alg}
	The \emph{inductive $O$-indexed predicate lifting} of a tree given by $\alpha$ is defined as
	
	\begin{center}
		$
		\begin{array}{l l}
		\liftTree{(-)} : O \to (A \to \Set) \to \Tree I A \to \Set, & \qquad
		\liftTree o P t = \alpha \;o\;(\mapTree P t) .
		\end{array}
		$
	\end{center}
\end{definition}

\noindent
Here we see the predicate $P$ as some observable property on values of type $A$. For each observation $o$, $\liftTree o$ lifts this property to an extensionally observable property on computations of type $A$.

For each effect, an 
$O$-indexed predicate lifting is specified according to two design principles:
\begin{itemize} 
	\item All predicate liftings in the family capture a notion of observation which is extensionally observable by a user of the program.
	\item Together, the predicate liftings are powerful enough to express differences between programs that might be detectable after certain canonical program transformations.
\end{itemize}
Later in this paper, we will formulate a notion of behavioural equivalence specified by this family of observations.
The second design principle expresses the desire that this behavioural equivalence is preserved by certain program transformations. 
One such transformation of particular interest to us
is that of program sequencing.

\begin{example}\label{ex:pure1}
	We start with purely deterministic programs, represented by trees in $\Tree I A$ where $K = \{ \skipo \}$ and $I \; \skipo = 1$; that is, trees have only one node label $\skipo$ which has only one continuation. The type $\Tree I A$ is therefore equivalent to Capretta's delay monad applied to $A$ \cite{Capretta05}.
	The label $\skipo$ corresponds to a deterministic program step, which may or may not be observable by the user of the program.
	A family of observations is chosen depending on whether the skip is observable.
	
	If skips are unobservable, we can only observe \emph{termination}.
	We take $O = \{ {\downarrow} \}$ and define $\pileaf {{\downarrow}} = \truebool$, meaning that we observe ${\downarrow}$ when the computation terminates, and $\pinode  {\skipo} {{\downarrow}} = \atom{(\ttt , {\downarrow})}$, meaning that we keep observing ${\downarrow}$ on the continuation of a not-yet-terminated computation. 
	In this case, $\liftTree {{\downarrow}} P t$ is provable when $t$ eventually produces a leaf $a : A$ such that $P \,a$ holds.
	
	Alternatively, we may consider $\skipo$ to be observable, for instance as a measure of evaluation time.
	We use as observations $O=\Nat$ expressing time limits on termination.
	We take $\pileaf n = \truebool$, $\pinode \skipo 0 = \falsebool$ and $\pinode \skipo {(n+1)} = \atom{(\ttt, n)}$, making $\liftTreeS{n}$ observe termination within at most $n$ skips.
\end{example}

\begin{example}\label{ex:nond1}
	Consider an unpredictable scheduler making binary decisions for a program.
	Such computations are denoted by \emph{binary decision trees}, i.e. coinductive trees in $\Tree{I}{A}$ over signature $K = \{ \Or \}$, $I \; \Or = \{ \leftbool,\rightbool\}$, whose nodes display choices given by the $\Or$ operation.
	Due to the unpredictability of the decisions made, we interpret these choices as being resolved \emph{nondeterministically}.
	Nondeterminism in functional languages has been thoroughly studied by Ong \cite{Ong} and Lassen \cite{Lassen} and many others.
	
	Users of nondeterministic programs cannot observe decision paths. They may at most observe what \emph{may} be possible, and what \emph{must} happen.
	As such, we formulate two observations $O = \{\may, \must\}$, with accompanying functions $\pileaf x = \truebool$, $\pinode \Or \may = \atom{(\leftbool, \may)} \vee \atom{(\rightbool, \may)}$ and $\pinode \Or \must = \atom{(\leftbool, \must)} \wedge \atom{(\rightbool, \must)}$.
	In this case, $\liftTree \may P t$ holds if there is some sequence of resolutions of choices for which $t$ produces a result satisfying $P$.
	On the other hand, $\liftTree \must P t$ holds if $t$ is guaranteed to produce a result satisfying $P$, no matter the decisions made.
\end{example}

\begin{example}\label{ex:glob1}
	Suppose programs can consult some global store containing a natural number, using a $\lookup : K$ and an update operation $\update : \Nat \to K$.
	The lookup operation is countably branching, with $I \; \lookup = \Nat$, and continues according to the current state.
	For any $n : \Nat$, $\update \ n$ stores $n$ in the global store, and continues in a unique way, defining $I \; (\update \ n) = 1$.
	Computations in $\Tree I A$ express communication patterns between a program and a global store, potentially yielding a result of type $A$.
	
	We suppose that both the state before the execution of a program and the state after the execution of a program are observable.
	As observations, we use pairs $O = \Nat \times \Nat$ of initial and final states.
	With $\liftTree{(n,m)}{P}{t}$ we aim to express that, when program $t$ is evaluated with starting state $n$, it produces a result satisfying $P$ with the final state equal to $m$.
	To this end, we define: 
	\begin{itemize}
		\item $\pileaf{(n,n)} = \truebool$, but $\pileaf{(n,m)} = \falsebool$ if $n \neq m$, since termination does not change the state,
		\item $\pinode{(\update \ k)}{(n,m)} = \atom{(\ttt , (k,m))}$, \ \ and $\pinode{\lookup}{(n,m)} = \atom{(n, (n,m))}$.
	\end{itemize}
\end{example}

\begin{example}\label{ex:inpu1}
	In this last example, we consider computations which repeatedly ask input bits from a user.
	We have one operation $\inp : K$ of arity $I \; \inp = \{ \leftbool,\rightbool\}$.
	The user chooses inputs for the program, and we specify two sorts of observations. 
	For any list of bits $l \in 2^*$ of length $n$, the user can check:
	\begin{itemize}
		\item whether the computation terminates after the sequence of bits $l$ is entered;
		\item whether after entering the sequence of bits $l$, the computation requests another input.
	\end{itemize}
	Note that the user can only input the sequence one by one, and only if the computation keeps requesting more inputs until the bitlist is exhausted.
	Observations $O = 2 \times 2^*$ consist of a bit denoting which of the two tests are performed and the list of bits used in the test. We implement the observations as follows:
	
	$\pileaf{(b, l)} = (b \equiv \leftbool) \wedge (l \equiv \langle \rangle)$,
	\qquad \qquad
	$\pinode{\inp}{(b, l)} = \begin{cases} b \equiv \rightbool & \text{if} \ l \equiv \langle \rangle \\
	\atom{(\mathsf{head}\;l, (b, \mathsf{tail}\;l))} & \text{otherwise.} \end{cases}$
\end{example}

For any observation $o$ in the examples above, the predicate $\liftTreeS{o}\;P$ expresses some effectful variation on termination checking, such as termination within some time limit, decision invariant termination, and termination with a certain final state.
The particular relevant notion of termination is dependent on our interpretation of the effect.
We say that $\liftTree{o}{P}{t}$ captures a notion of \emph{effectful total correctness} according to observation $o$.

The inductive predicates generated by $\alpha$ are however unable to `detect' the dual to termination: divergence.
From the user's perspective, divergence is mainly the absence of termination, the indefinite continuation of a program.
It is not possible to extensionally confirm whether a program diverges.
For some programs however, like in the given example $\infPath$, we can prove intensionally that a program will go on forever.
Following this example's lead, we will express \emph{provable effectful divergence} next,
using coinductive predicate liftings.

\section{Co-observations}\label{sec:co}

To prove correctness of a program, we either show that it terminates and produces a correct result, or we show it requires the resolution of some (effect) operation, in which case we may postpone the burden of proof to after this operation has been resolved.
In total correctness, we must verify that the program will eventually terminate.
In partial correctness however, we can get away with postponing the burden of proof indefinitely.
In that case, if the program diverges, it is still considered correct, since it will never return an incorrect result.

We implement this notion of partial correctness via \emph{coinductive predicate liftings}, a dual to the inductive predicate liftings from last section.
This additionally provides a technique for disproving inductive properties.
For example, to disprove that a program produces an even number, we can either show that the program produces an odd number, or that the program diverges.
In other words, if it is partially incorrect that a program produces an even number, then we can conclude that it is not totally correct that it produces an even number.
Using coinduction, we set up this notion of partial correctness in a constructive way, with the additional motivation to use it as a constructive notion of `negation'.
However, due to decidability issues, this will never truly be a perfect negation, only a generically large constructive complement.

\begin{definition}\label{def:coin-alg}
	Given $\pileaf : O \to 2$ and $\pinode : (l : K) \to O \to \Test{I \ k \times O}$, we define the $O$-indexed algebra $\beta : O \to \Tree{I}{\Set} \to \Set$ \emph{coinductively}, using the following judgments:
	
	\begin{center}
		$
		\small
		\infer{\leafbeta p : \beta\;o \;(\leaf P)}
		{p : P}
		\qquad\qquad
		\infer{\exepbeta b : \beta\;o \;(\leaf P)}
		{b : \pileaf o \equiv \falsebool}
		\qquad\qquad
		\infer={\nodebeta {ps} : \beta\;o \;(\node k {Ps})}
		{ps : \liftTest {(\lambda (x,o').\;\beta\;o'\;(Ps\;x))} {(\pinode k o)}}
		$
	\end{center}
\end{definition}
When a program returning values of type $\Set$ terminates immediately, there are two possible ways of
satisfying $\beta \;o$: either the program returns an inhabited type (as in the constructor $\leafbetaS$)
or $o$ is incorrect according to $\pileafS$ (as in the constructor $\exepbetaS$).
Conceptually, $\exepbetaS$ states that we do not need to verify correctness of the result if we are not currently verifying an observation which we consider `correct for termination'.

For instance, in Example \ref{ex:glob1} of global store, $\beta \; (0,1)$ expresses the following partial correctness property: if with starting state $0$ the program terminates with final state $1$, then it produces an inhabited type.
This statement could be satisfied with an exception ``$\exepbetaS$'', which entails that the program terminates with a final state we are not currently checking for. In that case, the property is satisfied since the condition we are testing for at termination is not met. Hence $\beta \; (0,1) \; (\leaf{P})$ is inhabited.

The double line in the $\nodebetaS$ constructor reflects the coinductive nature of the predicate lifting.
In the Agda implementation, $\beta$ is formulated using mutual induction-coinduction, similarly to the encoding of coinductive trees in (\ref{eq:agdatrees}). In particular, sizes appear as extra arguments to ensure productivity of corecursive definitions.
From $\nodebeta{ps}$ we can extract a proof in terms of proofs on the continuation $Ps$. Since $\nodebetaS$ is a coinductive constructor, it is possible for a proof of $\beta\;o\;t$ to contain
an infinite amount of such extractions.
Hence, proofs of $\beta$ can be self-referential, and can refer to infinitely many nodes in the tree.

\begin{definition}
	The \emph{coinductive $O$-indexed predicate lifting} of a tree given by $\beta$ is defined as:
	
	\begin{center}
		$
		\begin{array}{l l}
		\coliftTree{(-)} : O \to (A \to \Set) \to \Tree I A \to \Set, & \qquad
		\coliftTree o P t = \beta \;o\;(\mapTree P t) .
		\end{array}
		$
	\end{center}
\end{definition}

The coinductive $O$-indexed predicate lifting specify notions of partial correctness. In an effect-free language, this amounts to saying that, if the program terminates, it produces a correct result.
For effects, it captures more general notions of partial correctness, checking correctness of a result only at termination and only under certain circumstances. These circumstances are given by observations $o : O$ for which $\pileaf{o} \equiv \truebool$. 

Partial correctness is useful both in cases in which termination can be checked independently, or when one is more interested in the safety of results at the expense of possible divergence.
Considering non-terminating programs as at least not dangerous, gives more freedom when trying to design safe programs, and the coinductive predicate liftings give tools for the verification of such programs in many different effectful situations.

\subsection{Coinduction as Negation}

As mentioned before, another main use for coinduction is as a notion of negation for inductive properties. As such, it also has uses when studying and reasoning about inductive predicates.
But coinduction cannot always directly be used this way. We need to modify its formulation slightly.

In this paper, we introduce effectful versions of termination and divergence, which result in effectful versions of total and partial correctness complementing each other.
Considering that effect observations are formulated using tests, in order to properly function as a notion of constructive complement, we need to formulate the complement or \emph{dual} of a test, called $\dualTest : \Test A \to \Test A$.

\begin{center}
	$
	\arraycolsep=1.4pt
	\begin{array}{lcl@{\quad\quad}lcl}
	\dualTest{(\atom{a})} &=& \atom{a} &
	\dualTest{(t \vee u)} &=& \dualTest{t} \wedge \dualTest{u}\\
	\dualTest{\true} &=& \false &
	\dualTest{(\bigwedge t)} &=& \bigvee (\lambda n. \dualTest{(t \; n)})\\
	\dualTest{\false} &=& \true &
	\dualTest{(\bigvee t)} &=& \bigwedge (\lambda n. \dualTest{(t \; n)}) \\  
	\dualTest{(t \wedge u)} &=& \dualTest{t} \vee \dualTest{u}
	\end{array}
	$
\end{center}

Following well-established results in logic, this function can be used to give a constructive complement.
This is formulated by showing that it lifts \emph{disjoint} predicates on $A$ to disjoint predicates on $\Test A$.

\begin{definition}
	Two predicates $P, Q : A \to \Set$ on $A$ are \emph{disjoint} if there is a proof of $(a : A) \to P \; a \to Q \; a \to 0$.
	A predicate lifting over $F: \Set \to \Set$ is a function $f : (A \to \Set) \to (F \, A \to \Set)$.
	Two predicate liftings $f, g : (A \to \Set) \to (F \, A \to \Set)$ are \emph{distinct} if for any two disjoint predicates $P$ and $Q$ on $A$, the pair of lifted predicates $f \; P$ and $g \; Q$ are disjoint.
\end{definition}

\begin{lemma}\label{lem:distest}
	The predicate liftings $\liftTestS$ and $(\lambda P,t.\,\liftTest{P}{(\dualTest{t})})$ on $\TestS$ are distinct.
\end{lemma}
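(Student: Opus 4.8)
By the definition of \emph{distinct} given above, the term is the technical disjointness-preservation property, not mere inequality: since a predicate lifting over $\TestS$ is a map $(A \to \Set) \to (\Test A \to \Set)$, the two liftings in play are $\liftTestS$ itself, sending $P$ to $\lambda t.\,\liftTest{P}{t}$, and $g := \lambda P,t.\,\liftTest{P}{(\dualTest{t})}$. Unfolding \emph{distinct} for these, the plan is to show that for any two disjoint predicates $P, Q : A \to \Set$ the lifted predicates $\liftTestS\,P$ and $g\,Q$ on $\Test A$ are disjoint. Concretely, from a disjointness witness $d : (a : A) \to P\,a \to Q\,a \to 0$ I would construct
\[ \overline{d} : (t : \Test A) \to \liftTest{P}{t} \to \liftTest{Q}{(\dualTest{t})} \to 0, \]
by structural recursion on the test $t$, reading off $\liftTestS$ through its Curry--Howard interpretation (atoms as $P\,a$ resp. $Q\,a$, $\true$ as $1$, $\false$ as $0$, $\wedge$ as $\times$, $\vee$ as coproduct, $\bigwedge$ as the $\Nat$-indexed product, $\bigvee$ as the $\Nat$-indexed sum) and using the defining clauses of $\dualTestS$.

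First I would dispatch the base cases. For $t = \atom a$ we have $\dualTest{(\atom a)} = \atom a$, so the goal is exactly $P\,a \to Q\,a \to 0$, which is $d\,a$. For $t = \true$, $\dualTest{\true} = \false$, so the second hypothesis inhabits $0$ and is eliminated. For $t = \false$, $\liftTest{P}{\false} = 0$, so the first hypothesis is eliminated. In each case no induction hypothesis is needed.

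The inductive cases are driven by the De Morgan swaps in $\dualTestS$. For $t \wedge u$ a proof of $\liftTest{P}{(t \wedge u)}$ is a pair $(p_t,p_u)$, while $\dualTest{(t \wedge u)} = \dualTest{t} \vee \dualTest{u}$ turns the $Q$-side into the coproduct $\liftTest{Q}{(\dualTest{t})} + \liftTest{Q}{(\dualTest{u})}$; I case-split this coproduct and feed the matching projection of the pair to the recursive call on $t$ or on $u$. The $t \vee u$ case is the mirror image, pairing a coproduct of $P$-proofs against a product of $Q$-proofs. The infinitary cases are analogous: for $\bigwedge t$ I have a family $f : (n : \Nat) \to \liftTest{P}{(t\,n)}$ together with a witness $(n_0,q)$ of $\liftTest{Q}{(\bigvee (\lambda n.\,\dualTest{(t\,n)}))} = \Sigma\,(n:\Nat).\,\liftTest{Q}{(\dualTest{(t\,n)})}$, and I apply the recursive call at index $n_0$ to $f\,n_0$ and $q$; dually for $\bigvee t$, the index is extracted from the $P$-side $\bigvee$ and used to instantiate the universally quantified $Q$-side $\bigwedge$.

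I do not anticipate a genuine obstacle: the recursion is structural in $t$, and each clause of $\dualTestS$ forces exactly one way to combine the hypotheses, so the cases are essentially mechanical. The only point requiring care is the bookkeeping in the infinitary cases, where an existentially chosen index on one side must instantiate a universally quantified index on the other --- this is precisely the constructive content of De Morgan duality for countable $\bigwedge/\bigvee$, and it goes through because $t$ and $\dualTest{t}$ share the same index type $\Nat$ clause by clause. Because the construction is entirely constructive, it matches the Agda development and supplies the complement property that licenses $\dualTestS$ as a constructive negation.
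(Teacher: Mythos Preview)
Your proposal is correct and is exactly the expected argument: the paper does not spell out a proof in the text (it merely remarks that this follows ``well-established results in logic''), and the structural induction on $t$ you describe, driven by the De~Morgan clauses of $\dualTestS$, is the natural one and matches the Agda development. Each case is handled precisely as you indicate, including the infinitary $\bigwedge/\bigvee$ cases where the existential index on one side instantiates the universal on the other.
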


We use this dualization of tests to motivate a particular specification for our formulation of $\coliftTreeS{(-)}$. If for $\liftTree{(-)}$ we use $\pileafS$ and $\pinodeS$, then we will use $\pileafS$ and $\lambda k , o.\; \dualTest{(\pinode{k}{o})}$ as specification for the formulation of $\coliftTreeS{(-)}$. In this case, the premise of $\nodebetaS$ in Definition \ref{def:coin-alg} unfolds to the type $\liftTest {(\lambda (x,o').\;\beta\;o'\;(Ps\;x))} {(\dualTest{(\pinode{k}{o})})}$.

\begin{definition}
	Given $\pileaf : O \to 2$ and $\pinode : (l : K) \to O \to \Test{I \ k \times O}$, and suppose $\alpha$ is specified using $(O,\pileaf,\pinode)$ following Definition \ref{def:ind-alg}, and $\beta$ is specified using $(O,\pileaf,\lambda k , o.\; \dualTest{(\pinode{k}{o})})$ following Definition \ref{def:coin-alg}, then we call  $(\liftTreeS{(-)}, \coliftTreeS{(-)})$ a \emph{complementing pair}.
\end{definition}

\noindent
Note that there are multiple complementing pairs for each set $O$, since the specifications in terms of $\pileaf$ and $\pinode$ may vary.
We do this in order to establish the following result.
\begin{proposition}\label{prop:dist-pred}
	Suppose $(\liftTreeS{(-)}, \coliftTreeS{(-)})$ is a complementing pair of $O$-indexed predicate liftings, then for any $o : O$, the predicate liftings $\liftTreeS{o}$ and $\coliftTreeS{o}$ on $\TreeS \;I$ are distinct.
\end{proposition}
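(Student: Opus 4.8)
The plan is to unfold all the definitions, reduce to a purely algebraic statement about $\alpha$ and $\beta$, and then exploit the asymmetry that $\alpha$ is an inductively generated family while $\beta$ is coinductive: I will recurse on the $\alpha$-proof and inspect the $\beta$-proof only as far as that recursion forces. Concretely, fix disjoint predicates $P, Q : A \to \Set$. Since $\liftTree o P t = \alpha\;o\;(\mapTree P t)$ and $\coliftTree o Q t = \beta\;o\;(\mapTree Q t)$, it suffices to construct, for every $o : O$ and $t : \Tree I A$, a map $\alpha\;o\;(\mapTree P t) \to \beta\;o\;(\mapTree Q t) \to 0$. I would prove this statement, quantified over both $o$ and $t$, by induction on the derivation of $\alpha\;o\;(\mapTree P t)$.

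In the base case the $\alpha$-proof is $\leafalpha{b}{p}$, so $t = \leaf x$, $b : \pileaf o \equiv \truebool$ and $p : P\;x$. A proof of $\beta\;o\;(\leaf{(Q\;x)})$ is either $\leafbeta q$ with $q : Q\;x$, which is impossible by disjointness of $P$ and $Q$ at $x$, or $\exepbeta{b'}$ with $b' : \pileaf o \equiv \falsebool$, which is impossible by $b$ since $\truebool \not\equiv \falsebool$. In the step case the $\alpha$-proof is $\nodealpha{ps}$, so $t = \node k {ts}$ and $ps : \liftTest{R_1}{(\pinode k o)}$ with $R_1 := \lambda (x, o').\,\alpha\;o'\;(\mapTree P\;(ts\;x))$. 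Inverting the $\beta$-proof of $\beta\;o\;(\node k {(\lambda x.\,\mapTree Q\;(ts\;x))})$ forces $\nodebeta{ps'}$, and since $(\liftTreeS{(-)},\coliftTreeS{(-)})$ is a complementing pair (so $\beta$ is built with the dualized node function), $ps' : \liftTest{R_2}{(\dualTest{(\pinode k o)})}$ with $R_2 := \lambda (x, o').\,\beta\;o'\;(\mapTree Q\;(ts\;x))$. What remains is to extract $0$ from $ps$ and $ps'$, which is exactly the situation of Lemma~\ref{lem:distest} with the test $\pinode k o$: a nested induction on that test, matching each connective against its De Morgan dual (a $\wedge$ or $\bigwedge$ in $ps$ against a $\vee$ or $\bigvee$ in $ps'$, and conversely), lets one select a component of one proof guided by the other and recurse, eventually reaching atom positions $(x, o')$ at which $ps$ exhibits an inhabitant of $\alpha\;o'\;(\mapTree P\;(ts\;x))$ and $ps'$ one of $\beta\;o'\;(\mapTree Q\;(ts\;x))$.

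The step I expect to be the main obstacle is this last one, because Lemma~\ref{lem:distest} cannot quite be invoked as a black box. Its hypothesis is that $R_1$ and $R_2$ are disjoint \emph{uniformly}, which here is the present proposition again, instantiated at every subtree $ts\;x$ — and since those are coinductive subtrees, not structurally smaller than $t$, there is no recursion on trees available to establish it. The resolution is that this disjointness is only ever needed pointwise, at the atoms actually reached while zipping $ps$ against $ps'$, and at such an atom the $\alpha$-witness that $ps$ supplies is a subterm of $ps$, hence of $\nodealpha{ps}$; so the outer induction hypothesis applies to it and closes the case. Thus the two inductions — on the $\alpha$-derivation and, in the node case, on the test — must be interleaved (equivalently, one restates Lemma~\ref{lem:distest} in a pointwise form and reuses it); in the Agda development this shows up as the corresponding nested recursion together with the usual size annotations needed to keep the coinductive $\beta$ productive.
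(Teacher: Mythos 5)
Your proof matches the paper's: the paper also argues by recursion on the inductive proof of $\liftTree{o}{P}{t}$ (phrased as pattern matching on $t$, with the remark afterwards that the $\alpha$-term shrinks in each recursive call), handles the leaf case via disjointness of $P,Q$ and the clash $\truebool \equiv \falsebool$, and closes the node case by applying Lemma~\ref{lem:distest} to the test $\pinode{k}{o}$ with the inductive hypothesis supplying disjointness at the subtrees. The termination subtlety you isolate --- that the disjointness hypothesis of Lemma~\ref{lem:distest} is only legitimately available at the atoms actually reached inside $ps$, so the two inductions must be interleaved --- is precisely what the paper compresses into ``by inductive hypothesis'', so your reading is a faithful (and more careful) reconstruction rather than a different route.
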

\begin{proof}
	Let $P$ and $Q$ be two disjoint predicates on $A$, and let $t$ be a coinductive tree.
	We need to show that it is absurd to assume both $\liftTree{o}{P}{t}$ and $\coliftTree{o}{Q}{t}$.
	The proof proceeds by pattern matching on $t$.
	
	If $t = \leaf{a}$, then $\liftTree{o}{P}{t}$ must be proved by $\leafalpha{b}{p}$ for some $b : \pileaf{o} \equiv \truebool$ and $p : P \; a$.
	Similarly, $\coliftTree{o}{P}{t}$ must be proved by either
	1) $\leafbeta{q}$ for some $q : Q \; a$, hence with $p : P \; a$ and disjointness of $P$ and $Q$, we get a proof of absurdum;
	or 2) $\exepbeta{b'}$ for some $b' : \pileaf{o} \equiv \falsebool$, hence  $\truebool \equiv \falsebool$ by transitivity and symmetry applied to $b$ and $b'$.
	
	If $t = \node{k}{ts}$, proofs of $\liftTree{o}{P}{t}$ and $\coliftTree{o}{Q}{t}$ are required to be of the form $\nodealpha{ps}$ and $\nodebeta{qs}$ with $ps : \liftTest {(\lambda (x,o').\;\alpha\;o'\;(Ps\;x))} \; (\pinode{k}{o})$ and $qs : \liftTest {(\lambda (x,o').\;\beta\;o'\;(Ps\;x))} {(\dualTest{(\pinode{k}{o})})}$. By inductive hypothesis, the predicates $\lambda (i, o'). \;\liftTree{o'}{P}{(ts \; i)}$ and $\lambda (i, o'). \;\coliftTree{o'}{Q}{(ts \; i)}$ are disjoint. And this, together with the presence of both proof terms $ps$ and $qs$, is absurd by Lemma \ref{lem:distest}.
\end{proof}
Notice that the above proof is finite, in the sense that the term of
inductive type $\liftTree{o}{P}{t}$ gets smaller in each recursive call.

\subsection{Examples}

Let us now look at our running examples. Remember that we dualize the tests of our observations.

\begin{example}\label{ex:pure-co}
	Consider Example \ref{ex:pure1} concerning pure computations using the $\skipo$ operation.
	In case $\skipo$ is undetectable, and we have only one observation ${\downarrow}$ for termination,
	$\coliftTree{{\downarrow}}{P}{t}$ expresses partial correctness of $t$: either $t$ terminates producing a result correct under $P$, or $t$ diverges producing infinitely many skips.
	We get that $\liftTree{\downarrow}{P}{t}$ implies $\coliftTree{\downarrow}{P}{t}$.
	In the case that the $\skipo$s can be counted and $O = \Nat$, $\coliftTree{n}{P}{t}$ expresses partial correctness of $t$ under time limit $n$: either $t$ terminates within $n$ skips producing a result correct under $P$, or $t$ takes at least $n+1$ skips. We get that $\liftTree{n}{P}{t}$ implies $\coliftTree{n}{P}{t}$.
\end{example}

\begin{example}\label{ex:nond-co}
	Consider Example \ref{ex:nond1} concerning nondeterministic computations using the binary choice operation $\Or$.
	Then $\coliftTree{\may}{P}{t}$ states that it is not possible to get a result which does not satisfy $P$, so any decision process for $t$ must either lead to divergence or the production of a result correct under $P$.
	On the other hand, $\coliftTree{\must}{P}{t}$ says that we cannot guarantee that $t$ produces a result which does not satisfy $P$, so either $t$ may diverge or it may produce a result correct under $P$.
	
	Note that $\dualTestS$ swaps the $\vee$ and $\wedge$ definitions in $\pinode{\Or}{o}$.
	As such, $\beta \; \may$ acts like a partial correctness version of must, and $\beta \; \must$ acts like a partial correctness version of may.
	As a result, $\liftTree{\may}{P}{t}$ implies $\coliftTree{\must}{P}{t}$, and $\liftTree{\must}{P}{t}$ implies $\coliftTree{\may}{P}{t}$.
\end{example}

\begin{example}\label{ex:glob-co}
	We already briefly discussed the coinductive predicate lifting generated by global store observations from Example \ref{ex:glob1}.
	Then $\coliftTree{(n,m)}{P}{t}$ states that: if with starting state $n$ the program $t$ terminates with final state $m$, then it produces a result correct under $P$.
	We have that $\liftTree{(n,m)}{P}{t}$ implies $\coliftTree{(n,m)}{P}{t}$.
	Moreover, due to the exception base case $\exepbetaS$, we have that for $m \neq k$, $\liftTree{(n,k)}{(\lambda x.1)}{t}$ implies $\coliftTree{(n,m)}{P}{t}$.
	These predicates vary from more traditional partial correctness properties for global store from the literature. However, with the logic in Section \ref{sec:logic}, these alternative formulations can be reconstructed.
\end{example}

\begin{example}\label{ex:inpu-co}
	Lastly, we consider the input requesting computations from Example \ref{ex:inpu1}.
	These use as observations $O = 2 \times 2^*$, a pair consisting of a bit and a list of bit inputs.
	Then $\coliftTree{(\leftbool,l)}{P}{t}$ expresses the partial property telling us that: 
	if $t$ can be given the input sequence $l$ and if $t$ terminates after it is given, then it returns a value satisfying predicate $P$.
	On the other hand, $\coliftTree{(\rightbool,l)}{P}{t}$ says:
	if the user is able to input the sequence $l$, then the computation $t$ will not ask for another input afterwards.
	Note that neither $\liftTree{(\rightbool,l)}{P}{t}$ nor $\coliftTree{(\rightbool,l)}{P}{t}$ concern themselves with the predicate $P$.
\end{example}

\noindent
\textbf{A Note on Computability.}
The coinductive counterpart $\coliftTree{o}{P}{t}$ to the inductive property $\liftTree{o}{Q}{t}$ does not offer a complete notion of negation, such as is offered by the more usual functions into the empty datatype $\liftTree{o}{P}{t} \to 0$.
It is however a large constructive distinct property, which allows for double negation elimination.
We briefly explore the difference between the two notions of negations.

Consider a possible enumeration of pairs of all Turing machines and their input arguments, and a function returning $\ttt : 1$ on the ones that terminate on their input. In other words, we consider a function $\mathsf{TM} : \Nat \to \Tree{I}{1}$, where $I$ is the signature with skips of Example~\ref{ex:pure1}.
Consider the always true and always false predicates $T, F : 1 \to \Set$. 
One can construct a function $\liftTree{\downarrow}{T}{\mathsf{TM}} : \Nat \to \Set$ which for $n : \Nat$ gives proofs that the $n$-th Turing Machine terminates on its input, and also a function $\coliftTree{\downarrow}{F}{\mathsf{TM}} : \Nat \to \Set$ which for $n : \Nat$ gives proofs that the $n$-th Turing Machine diverges.
The two predicates on $\Nat$ are necessarily disjoint.
Moreover, by the Halting problem, they cannot partition $\Nat$ either.
Hence, they are not complete negations of each other.

This program can also be adapted to show differences between distinctions for other examples. In nondeterminism, $\lambda n.\; \Or(\mathsf{TM} \ n, \leaf{\ttt})$ lies in the gap between $\liftTreeS{\must}\;{T}$ and $\coliftTreeS{\must}\;{F}$.
The program $\lambda n. \;\Or(\mathsf{TM} \ n, \Omega)$, where $\Omega$ is a provably always diverging computation, lies in the gap between $\liftTreeS{\may}\;{T}$ and $\coliftTreeS{\may}\;{F}$.

\section{Behavioural Logic}\label{sec:logic}

In this section we introduce our generic programming language of denotations and
formulate a logic for expressing behavioural properties of programs in
this language.
These are meta-theoretic programs as can be given in Agda terms, rather than one specific programming language.
As such, it is similar to other generic languages based on coinductive trees, like those formulated around interaction trees \cite{XiaZHHMPZ20}.
Their type is specified by the following small grammar, also appearing in Moggi's monadic metalanguage~\cite{monad}:
\begin{center}
$\sigma,\rho ::= \N \, | \, \sigma \tto \rho \, | \, \sigma \otimes \rho\,|\, \U\;\sigma$. 
\end{center}

The collection of these syntactic types is called $\Ty$, and it
includes names for the type of natural numbers,
function type, Cartesian product and a unary type former $\U$ for turning
computations into values, often present in call-by-value
languages. Well-typed terms of syntactic type $\sigma$ are elements of
type $\Tm \;b \;\sigma$, where $b : \{\val,\cpt\}$ is a Boolean used to
distinguish value terms from computation terms.
Notably, computation terms of syntactic type $\sigma$
are coinductive trees returning $\sigma$-typed values in their
leaves. Value terms of syntactic type $\U\;\sigma$ are exactly
computation terms of syntactic type $\sigma$.

\begin{center}{\small
		\begin{tabular}{l l | l | l}\label{eq:terms}
			$\Val : \Ty \to \Set$	&	& $\Cpt : \Ty \to \Set$ & $\Tm : \Bool \to \Ty \to \Set$ \\
			\hline
			$\Val\;\N = \Nat$ & $\Val\;(\sigma \otimes \rho) = \Val\;\sigma \times \Val\;\rho$ & $\Cpt\;\sigma = \Tree I {(\Val\;\sigma)}$ &  $\Tm\;\val\;\sigma = \Val\;\sigma$ \\
			$\Val\;(\U\;\sigma) = \Cpt\;\sigma$ & $\Val\;(\sigma \tto \rho) = \Val\;\sigma \to \Cpt\;\rho$ &  & $\Tm\;\cpt\;\sigma = \Cpt\;\sigma$
	\end{tabular}}
\end{center}

Our behavioural logic is composed of value formulae and computation
formulae. Formulae are elements of type $\Formo b \sigma$, where $b$ is
either $\val$ (value) or $\cpt$ (computation) and $\sigma$ is a
syntactic type, which are inductively generated by the following
inference rules:
\begin{equation}\label{eq:formulae}
	\def\arraystretch{2.3}  
	\begin{array}{c}
	\infer{\eq\;n : \Formo \val \N}{n : \Nat}\qquad
	\infer{\neqF\;n : \Formo \val \N}{n : \Nat}\qquad
	\infer{V \mapsto \phi : \Formo \val {(\sigma \tto \rho)}}
	{V : \Val\;\sigma & \phi : \Formo \cpt \rho}\\
	\infer{\lef \;\phi : \Formo \val {(\sigma \otimes \rho)}}
	{\phi : \Formo \val \sigma}\qquad
	\infer{\rig \;\phi : \Formo \val {(\sigma \otimes \rho)}}
	{\phi : \Formo \val \rho}\qquad
	\infer{\thunk\;\phi : \Formo \val {(\U\;\sigma)}}
	{\phi : \Formo\cpt\sigma}\\
	\infer{\test\; \phi s : \Formo b \sigma}
	{\phi s : \Test {(\Formo b \sigma)}}\qquad
	\infer{\obsalpha\; o\; \phi : \Formo \cpt \sigma}
	{o : O & \phi : \Formo \val \sigma}\qquad
	\infer{\obsbeta\; o\; \phi : \Formo \cpt \sigma}
	{o : O & \phi : \Formo \val \sigma}
	\end{array}
\end{equation}

\noindent
We assume $(\liftTreeS{(-)}, \coliftTreeS{(-)})$ is a complementing pair of $O$-indexed predicate liftings. We define satisfiability as a relation between syntactic terms in $\Tm \;
b\;\sigma$ and formulae in $\Formo b \sigma$.
\begin{center}
	$
	\arraycolsep=1.5pt
	\begin{array}{lcl@{\qquad}lcl@{\qquad}lcl}
	m \models \eq\;n & = & m \equiv n &
	(V , W) \models \lef \;\phi & = & V \models \phi &
	P \models \test\;\phi s & = & \liftTest {(\lambda \phi.\;P\models \phi)} \phi s \\
	m \models \neqF\;n & = & m \not\equiv n &
	(V , W) \models \rig \;\phi & = & W \models \phi &
	M \models \obsalpha\;o\;\phi & = &\liftTree {o} (\lambda V.\;V \models \phi) \ M \\
	W \models V \mapsto \phi & = & W\;V \models \phi &
	V \models \thunk\;\phi & = & V \models \phi &
	M \models \obsbeta\;o\;\phi & = & \coliftTree {o} (\lambda V.\;V \models \phi) \ M 
	\end{array}
	$
\end{center}

The formulation of the logic and the satisfiability relation is quite
standard~\cite{modal-journal}, with a few exceptions. Formula formers
$\obsalpha\;o$ and $\obsbeta\;o$ play the role of modalities. E.g. in
the case of Example \ref{ex:nond1}, $\obsalpha\;\may$ and
$\obsalpha\;\must$ correspond to modalities $\Diamond$ and $\Box$, and
$\obsbeta\;\may$ and $\obsbeta\;\must$ are coinductive variants taking
into account possible non-termination. The formula former $\test$
allows to construct formulae via the simple test logic $\Test$.
For function testing, we opted for a concrete argument passing test as used in \cite{modal-journal}. This follows traditional testing approaches as used in applicative bisimilarity \cite{Abramsky90} and corresponding testing logics.
Alternatively, Hoare-style predicates based on preconditions could be utilized. However, it seems in practice these are both cumbersome and difficult to work with.

The logic can be used for testing a variety of useful properties. Particularly interesting is the nesting of inductive and coinductive predicates, obtained by putting the predicates in a sequence.
Using combinations of predicate liftings, we can easily swap between inductive and coinductive at different type levels, as needed.
As an example, using countable skips from Examples \ref{ex:pure1} and \ref{ex:pure-co}, we can formulate a formula capturing the following property of programs of type $\U \, \sigma$: if we evaluate both the program and the result it produces, it will not take longer to evaluate the result then it took to evaluate the initial program. This can for instance be constructed as follows: $\test (\bigwedge \lambda n . \ \atom (\obsbeta \ n \ (\thunk \ (\obsalpha \ n \ (\test \ \true)))))$.

Satisfiability of formulas is employed in the specification of an extensional
ordering on syntactic terms. Given $P,Q : \Tm\;b\;\sigma$, we define the
\emph{logical approximation} $P \le_\Tm Q = (\phi : \Formo b \sigma)
\to P \models \phi \to Q \models \phi$. Program $P$ is below program
$Q$ in this ordering if and only if every formula satisfied by $P$ is also
satisfied by $Q$. An extensional \emph{logical equivalence} of
syntactic terms $P \equiv_\Tm Q$ is given by $(P \le_\Tm Q) \times (Q \le_\Tm P)$.

The negation of a formula $\phi$ is not among the generating
connectives of the behavioural logic in
(\ref{eq:formulae}). Nevertheless, an emergent notion of negation is
admissible.

\begin{definition}
	We define negation as a function $\negFormS : \Formo b \sigma \to \Formo b \sigma$  using the following rules:
	\begin{center}
		$
			\arraycolsep=1.4pt
			\begin{array}{lcl@{\quad \ \ }lcl@{\quad \ \ }lcl}
			\negForm {(\eq\;n)} & = & \neqF\;n &
			\negForm {(\neqF\;n)} & = & \eq\;n \\
			\negForm {(\lef\; \phi)} & = & \lef \;(\negForm \phi) &
			\negForm {(\rig\; \phi)} & = & \rig \;(\negForm \phi)\\
			\negForm {(V \mapsto \phi)} & = & V \mapsto \negForm \phi &
			\negForm {(\thunk\; \phi)} & = & \thunk \;(\negForm \phi) \\
			\negForm {(\obsalpha\;o\;\phi)} & = & \obsbeta\;o\;(\negForm \phi) &
			\negForm {(\obsbeta\;o\;\phi)} & = & \obsalpha\;o\;(\negForm \phi)\\
			\negForm {(\test\; \phi s)} & = & \test\;(\mapTest {\negFormS} {\phi s}) & & &
			\end{array}
		$
	\end{center}
\end{definition}
The base cases $\eq$ and $\neqF$ are each others
complements, and a similar relationship exists between $\obsalpha$ and
$\obsbeta$.
Negation $\negFormS$ is involutive in the following sense: for all
formulae $\phi$, we have the equivalence $\negForm{(\negForm\phi)} \equiv \phi$. Notice in
particular that the double negation of a formula $\phi$ is
syntactically equal to $\phi$, not merely related by some notion of
logical equivalence.

The use of the word ``negation'' for the formula operation
$\negFormS$ is justified by the fact that no syntactic program $P$
satisfies simultaneously a formula $\phi$ and its complement $\negForm
\phi$.

\begin{proposition}\label{prop:disj-form}
	For all $\phi : \Formo b \sigma$, the predicates $(-) \models \phi$ and $(-)
	\models \negForm \phi$ are disjoint.
\end{proposition}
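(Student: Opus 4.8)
The plan is to prove this by structural induction on the formula $\phi : \Formo b \sigma$, showing in each case that the predicates $(-) \models \phi$ and $(-) \models \negForm\phi$ cannot be simultaneously inhabited. Formally, we want a function taking a term $P : \Tm\;b\;\sigma$ together with proofs of $P \models \phi$ and $P \models \negForm\phi$ to an element of $0$. The base cases $\eq\;n$ and $\neqF\;n$ are immediate: $P \models \eq\;n$ gives $P \equiv n$, while $P \models \negForm{(\eq\;n)} = \neqF\;n$ gives $P \not\equiv n$, and applying the latter to the former yields $0$; the case $\neqF\;n$ is symmetric.

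Next I would handle the structural connectives $\lef$, $\rig$, $V \mapsto (-)$, and $\thunk$, all of which simply propagate: e.g. for $\lef\;\phi$ on a term $(V,W)$, satisfaction unfolds to $V \models \phi$ and $V \models \negForm\phi$, so the induction hypothesis for $\phi$ applies directly. For $\test\;\phi s$ with $\phi s : \Test{(\Formo b \sigma)}$, the key observation is that $\negForm{(\test\;\phi s)} = \test\;(\mapTest{\negFormS}{\phi s})$, and satisfaction is $\liftTest{(\lambda\phi.\;P \models \phi)}{\phi s}$. Here I would invoke Lemma \ref{lem:distest}: the predicate $\lambda\phi.\;P \models \phi$ and the predicate $\lambda\phi.\;P \models \negForm\phi$ are disjoint by the induction hypothesis (applied pointwise over all subformulae appearing in $\phi s$), so the distinctness of $\liftTestS$ from $\lambda P,t.\,\liftTest{P}{(\dualTest{t})}$ — together with the fact that $\mapTestS$ with $\negFormS$ plays the role that $\dualTestS$ plays at the level of tests — gives the contradiction. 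One should check that $\liftTest{(\lambda\phi.\;P\models\phi)}{(\mapTest{\negFormS}{\phi s})}$ coincides with $\liftTest{(\lambda\phi.\;P\models\negForm\phi)}{\phi s}$, which is just functoriality of $\mapTestS$ composed with $\liftTestS$.

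The crux is the modality cases $\obsalpha\;o\;\phi$ and $\obsbeta\;o\;\phi$. We have $\negForm{(\obsalpha\;o\;\phi)} = \obsbeta\;o\;(\negForm\phi)$, so for a computation term $M$ we must derive $0$ from $\liftTree{o}{(\lambda V.\;V\models\phi)}{M}$ and $\coliftTree{o}{(\lambda V.\;V\models\negForm\phi)}{M}$. This is exactly the statement of Proposition \ref{prop:dist-pred}: since $(\liftTreeS{(-)}, \coliftTreeS{(-)})$ is a complementing pair, $\liftTreeS{o}$ and $\coliftTreeS{o}$ are distinct predicate liftings, and by the induction hypothesis for $\phi$ the predicates $\lambda V.\;V\models\phi$ and $\lambda V.\;V\models\negForm\phi$ are disjoint; applying distinctness yields disjointness of the lifted predicates on $\TreeS\;I\;(\Val\;\sigma)$, hence absurdity. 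The case $\obsbeta\;o\;\phi$ is symmetric, using that distinctness of a pair of predicate liftings is itself symmetric (disjointness of $f\;P$ and $g\;Q$ for all disjoint $P,Q$ is a symmetric condition). I expect the main obstacle to be bookkeeping rather than conceptual: making sure the induction hypothesis is stated with enough generality (quantified over both the term and which of $\phi$, $\negForm\phi$ is assigned to which side) so that the symmetric cases and the pointwise application inside $\test$ go through cleanly, and verifying the definitional unfoldings that identify $\mapTestS\;\negFormS$ on formulae with the $\dualTestS$-style dualization used in Lemma \ref{lem:distest}.
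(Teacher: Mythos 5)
Your strategy is the same as the paper's: induction on $\phi$, with the modality cases $\obsalpha\;o\;\psi$ and $\obsbeta\;o\;\psi$ discharged by combining the induction hypothesis (disjointness of $(-)\models\psi$ and $(-)\models\negForm{\psi}$) with Proposition~\ref{prop:dist-pred} for the complementing pair. The paper's proof records only this ``interesting case'' and leaves the remaining cases implicit, so on the case it actually writes out you agree exactly, and your treatment of the base and structural cases is the routine argument the paper suppresses.

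The one place where your extra detail matters is the $\test$ case, and the caveat you flag there is not mere bookkeeping. Lemma~\ref{lem:distest} gives disjointness of $\liftTest{P}{t}$ and $\liftTest{Q}{(\dualTest{t})}$, i.e.\ it requires the two tests to be structurally \emph{dual}. But $\mapTest{\negFormS}{\phi s}$ only negates the atoms and preserves the connectives, so after your (correct) functoriality rewrite you are comparing $\liftTest{(\lambda\phi.\;P\models\phi)}{\phi s}$ with $\liftTest{(\lambda\phi.\;P\models\negForm{\phi})}{\phi s}$ over the \emph{same} test $\phi s$ --- and pointwise-disjoint atoms over the same disjunctive structure are not disjoint: take $\phi s = \atom{(\eq\;0)}\vee\atom{(\eq\;1)}$ and $P=0$, where the left disjunct witnesses $\test\;\phi s$ and the right disjunct witnesses its negation. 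So for this case to go through, the negation on $\test$ must also dualize the test, e.g.\ $\negForm{(\test\;\phi s)} = \test\;(\dualTest{(\mapTest{\negFormS}{\phi s})})$; with that reading your appeal to Lemma~\ref{lem:distest} is exactly right (and involutivity survives, since $\dualTestS$ is involutive and commutes with $\mapTestS$). This is a defect of the printed definition rather than of your argument, but your proof as written leans on the identification of $\mapTestS\;\negFormS$ with a $\dualTestS$-style dualization, which is false for the definition as literally stated and needs the dualization inserted before the case closes.
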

\begin{proof}
	Given a term $P$, we need to show that assuming both $P \models \phi$
	and $P \models \negForm \phi$ is absurd. The interesting case is $\phi
	= \obsalpha\;o\;\psi$ (and dually $\phi = \obsbeta\;o\;\psi$, which is
	proved in a similar way). In this case, the two assumptions rewrite to
	$\liftTree {o} {(\lambda V.\;V \models \psi)} P$ and $\coliftTree
	{o} {(\lambda V.\;V \models \negForm \psi)} P$ respectively. By
	inductive hypothesis, $(-) \models \psi$ and $(-) \models \negForm
	\psi$ are disjoint, therefore invoking
	Proposition~\ref{prop:dist-pred} generates a contradiction.
\end{proof}

The logical approximation $\le_\Tm$ is not symmetric, since $P \le_\Tm
Q$ does not generally entail $Q \le_\Tm P$. The validity of $P \le_\Tm
Q$ seems also not sufficient for deriving ${(\phi : \Formo b
	\sigma) \to Q \models \phi} \to ((P \models \phi) \to 0) \to 0$, i.e. a
doubly-negated variant of $Q \le_\Tm P$. What can be proved from $P
\le_\Tm Q$ instead is the impossibility of $P$ and $Q$ to satisfy dual
formulae: for any formula $\phi$, it is not the case that both $Q \models \phi$ and $P \models
\negForm \phi$ are derivable. This suggests
the introduction of a weak notion of logical approximation $P
\simeq_\Tm Q = (\phi : \Formo b \sigma) \to (P \models \phi) \to (Q \models \negForm \phi) \to 0$, so that $P \le_\Tm Q$ implies $P
\simeq_\Tm Q$. Unlike the logical approximation $\le_\Tm$, the
relation $\simeq_\Tm$ is symmetric, which is easily provable invoking
the involutive property of negation $\negFormS$.

Alternatively, we could consider a logic with only inductive predicate liftings like in \cite{modal-journal}, and a logic with only coinductive predicate liftings. The negation operation is a function between the two logics, which sends a formula of one logic to its constructive complement.
Even if one only considers behavioural equivalence under one of these logics, statements like $M \models \phi \to N \models \phi$ may be verified or falsified using the constructive negation of $\phi$ from the other logic.
As such, coinductive statements enrich our toolkit for reasoning about pre-existing notions of equivalence based on induction.

\section{Decomposability}\label{sec:compos}
Higher-order computations can return values which are computations themselves.
Using the logic of the previous section, we can formulate behavioural properties of such computations.
Properties of higher-order programs may contain multiple inductive and/or coinductive predicate liftings in succession, allowing for nuanced statements.

One thing we can do with a higher-order program is \emph{sequence} it.
This evaluates the program, and then continues by evaluating its result, thereby putting the two evaluations in sequence.
The question is, can we prove whether a sequenced program satisfies some behavioural property, using behavioural properties of the unsequenced higher-order program?
Answering this question is fundamental for establishing a plethora of compositionality results.

To simplify the question, we consider \emph{double trees} of type $\Tree{I}{(\Tree{I}{\Set})}$.
These trees occur naturally as a result of, for instance, the study of computations of type $\U \; \tau$ using a value formula on $\tau$.
The sequencing of programs corresponds to an application of the monad multiplication map $\mu~:~\Tree{I}{(\Tree{I}{\Set})} \to \Tree{I}{\Set}$.
Both single trees in $\Tree{I}{\Set}$ and double trees have natural relations of extensional ordering:

\begin{itemize}
	\item For $t_0, t_1 : \Tree{I}{\Set}$ we say $t_0 \sqsubseteq t_1$ if for any $o : O$, 
	$\alpha \; o \; t_0$ implies $\alpha \; o \; t_1$, and $\beta \; o \; t_0$ implies $\beta \; o \; t_1$.
	\item For $d_0, d_1 : \Tree{I}{(\Tree{I}{\Set})}$ we say $d_0 \sqsubseteq d_1$ if for any $o_0, o_1 : O$, 
	$\liftTree{o_0}{(\alpha \; o_1)}{d_0}$ implies $\liftTree{o_0}{(\alpha \; o_1)}{d_1}$, and $\coliftTree{o_0}{(\beta \; o_1)}{d_0}$ implies $\coliftTree{o_0}{(\beta \; o_1)}{d_1}$.
\end{itemize}

Statements of the form $\liftTree{o_0}{(\alpha \; o_1)}{d_0}$ and $\coliftTree{o_0}{(\beta \; o_1)}{d_0}$ are called \emph{observational towers}, hinting at the fact that these can be extended to include even more layers of predicate liftings.
We express preservation of observations over program sequencing as follows.

\begin{definition}
	We call a triple $(O, \pileafS, \pinodeS)$ \emph{strongly decomposable} if for any two double trees $d_0, d_1 : \Tree{I}{(\Tree{I}{\Set})}$, $d_0 \sqsubseteq d_1$ implies $\mu \; d_0 \sqsubseteq \mu \; d_1$. 
\end{definition}

One way of establishing strong decomposability is by showing that any observations $\alpha \; o$ and $\beta \; o$ on sequenced programs can be \emph{decomposed} into a test $\pid \; o : \Test{(O \times O)}$ whose atoms are given by pairs $(o_0,o_1) \in O \times O$ representing observational tower statements. 

\begin{definition}
	We say that $\pid : O \to \Test{(O \times O)}$ is an \emph{$\alpha$-decomposition} if, for any $d : \Tree{I}{(\Tree{I}{\Set})}$, $\alpha \; o \; (\mu \; d)$ iff $\liftTest{(\lambda (o_0, o_1). \;\liftTree{o_0}{(\alpha \; o_1)}{d})}{(\pid \; o)}$.
	We say that $\pid : O \to \Test{(O \times O)}$ is a \emph{$\beta$-decomposition} if,
	for any $d : \Tree{I}{(\Tree{I}{\Set})}$, $\beta \; o \; (\mu \; d)$ if and only if $\liftTest{(\lambda (o_0, o_1). \;\coliftTree{o_0}{(\beta \; o_1)}{d})}{(\pid \; o)}$.
\end{definition}

\begin{lemma}\label{lem:decom}
	$(O, \pileafS, \pinodeS)$ is strongly decomposable if there exist both an $\alpha$-decomposition and a $\beta$-decomposition.
\end{lemma}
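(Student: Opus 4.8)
The plan is to reduce strong decomposability to a single monotonicity property of the test interpretation $\liftTestS$. Fix an $\alpha$-decomposition $\pidalpha$ and a $\beta$-decomposition $\pidbeta$, and assume $d_0 \sqsubseteq d_1$ for double trees $d_0, d_1 : \Tree{I}{(\Tree{I}{\Set})}$. To prove $\mu\;d_0 \sqsubseteq \mu\;d_1$ I would fix $o : O$ and establish the two implications $\alpha\;o\;(\mu\;d_0) \to \alpha\;o\;(\mu\;d_1)$ and $\beta\;o\;(\mu\;d_0) \to \beta\;o\;(\mu\;d_1)$ separately, in each case sandwiching the implication between the two directions of the relevant decomposition equivalence and transporting along monotonicity in the middle.

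First I would prove the auxiliary fact that $\liftTestS$ is monotone in its predicate argument: if $P, Q : A \to \Set$ satisfy $(a : A) \to P\;a \to Q\;a$, then $\liftTest{P}{s} \to \liftTest{Q}{s}$ for every $s : \Test A$. This goes by induction on $s$: on an atom $\atom a$ it is just the hypothesis instantiated at $a$; on $\true$ and $\false$ it is immediate; and on $t \wedge u$, $t \vee u$, $\bigwedge t$, $\bigvee t$ the interpretations are Cartesian product, coproduct, dependent product over $\Nat$ and dependent sum over $\Nat$ respectively, all covariant, so the inductive hypotheses combine directly. This is exactly the step that relies on the grammar $\TestS$ having no negative connective.

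Then, for the $\alpha$ half, given $\alpha\;o\;(\mu\;d_0)$ I would apply the forward direction of the $\alpha$-decomposition to obtain $\liftTest{(\lambda (o_0,o_1).\;\liftTree{o_0}{(\alpha\;o_1)}{d_0})}{(\pidalpha\;o)}$. The hypothesis $d_0 \sqsubseteq d_1$ provides, for every $(o_0,o_1)$, the implication $\liftTree{o_0}{(\alpha\;o_1)}{d_0} \to \liftTree{o_0}{(\alpha\;o_1)}{d_1}$; feeding this together with the test $\pidalpha\;o$ to the monotonicity lemma gives $\liftTest{(\lambda (o_0,o_1).\;\liftTree{o_0}{(\alpha\;o_1)}{d_1})}{(\pidalpha\;o)}$, and the backward direction of the $\alpha$-decomposition then yields $\alpha\;o\;(\mu\;d_1)$, as required. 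The $\beta$ half is the same argument verbatim, with $\pidbeta$ replacing $\pidalpha$ and the coinductive half $\coliftTree{o_0}{(\beta\;o_1)}{d_0} \to \coliftTree{o_0}{(\beta\;o_1)}{d_1}$ of $d_0 \sqsubseteq d_1$.

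I expect no serious obstacle: the only genuine content is the monotonicity lemma for $\liftTestS$, and even that is a short positivity-driven induction; everything else is bookkeeping with the two biconditionals. It is worth noting that no coinduction appears in this proof — although $\beta$ is defined coinductively, the $\beta$-decomposition is used purely as a black-box equivalence, so all of the coinductive reasoning has already been discharged in the (assumed) construction of $\pidbeta$.
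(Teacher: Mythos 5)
Your proof is correct, and it is the natural argument: the paper itself states Lemma \ref{lem:decom} without a written proof (deferring to the Agda formalization), and your route --- monotonicity of $\liftTestS$ in its predicate argument, established by a direct induction over the negation-free grammar $\TestS$, sandwiched between the two directions of the $\alpha$- and $\beta$-decomposition equivalences --- is exactly what the definitions are set up to support, since the atoms of $\pid\;o$ are precisely the observational-tower statements that $d_0 \sqsubseteq d_1$ relates. Your closing remark is also accurate: no coinduction is needed here because the $\beta$-decomposition is consumed as a black-box biconditional.
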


\noindent
Note that in the lemma, we do not require the two decompositions to be the same. In fact, in our examples described later on, it is possible to find an $\alpha$-decomposition $\pid$ such that its dual $\lambda o.\; \dualTest{(\pid \; o)}$ is a $\beta$-decomposition.

The word `strong' in `strongly decomposable' reflects the fact that the relation $\sqsubseteq$ between double trees is weaker than the relation induced by the logic on higher-order programs.
As a result, the property is stronger than necessary for establishing most compositionality results.
To establish the weaker notion of sequence preservation, \emph{decomposability}, it is sufficient to show the existence of $\alpha$ and $\beta$-decompositions of the form $O \to \Test{(O \times \Test{O})}$.
Luckily, all examples we consider are strongly decomposable following Lemma \ref{lem:decom}.
Hence we need not consider this weaker notion.

The decompositions can be generalized to any computation of type $\U \;\tau$, for any syntactic type $\tau : \Ty$. Given a formula $\phi : \Formo{\Val}{\tau}$ and a test $t : \Test{(O \times O)}$, we define a formula $t_{\alpha}[\phi] : \Formo{\Cpt}{(\U \; \tau)}$ 
as $\test\;(\mapTest{(\lambda (o_1, o_2).\; \obsalpha \; o_1 \; (\thunk \; \obsalpha \; o_2 \; \phi))}{t})$. We define a formula $t_{\beta}[\phi] : \Formo{\Cpt}{(\U \; \tau)}$ in the same way but replacing each occurrence of $\alpha$ by $\beta$.

\begin{lemma}
	Suppose given $\pid : O \to \Test{(O \times O)}$,
	\begin{itemize}
		\item if $\pid$ is an $\alpha$-decomposition, then, for any $\phi : \Formo{\val}{\tau}$, $M : \Cpt \; \tau$ and $o : O$, we have
		$\mu \; M \models \obsalpha \; o \; \phi$ if and only if $M \models (\pid \; o)_{\alpha}[\phi]$;
		\item if $\pid$ is a $\beta$-decomposition, then, for any $\phi : \Formo{\val}{\tau}$, $M : \Cpt \; \tau$ and $o : O$, we have
		$\mu \; M \models \obsbeta \; o \; \phi$ if and only if $M \models (\pi d \; o)_{\beta}[\phi]$.
	\end{itemize}
\end{lemma}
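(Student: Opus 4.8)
The plan is to prove both biconditionals by the same routine computation; I will spell out the $\alpha$-case, the $\beta$-case being obtained by replacing the inductive algebra $\alpha$ and the modality $\obsalpha$ by their coinductive counterparts $\beta$ and $\obsbeta$ throughout, and by invoking the $\beta$-decomposition hypothesis in place of the $\alpha$-decomposition one. Fix $\phi : \Formo{\val}{\tau}$ and abbreviate $f := \lambda V.\,(V \models \phi)$. For the statement to typecheck one should read $M : \Cpt\;(\U\;\tau) = \Tree I {(\Tree I {(\Val\;\tau)})}$, so that $\mu\;M : \Cpt\;\tau$ is a single tree and the formula $(\pid\;o)_\alpha[\phi] : \Formo{\cpt}{(\U\;\tau)}$ can be tested against $M$. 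Put $d := \mathsf{mapTree}\;(\mathsf{mapTree}\;f)\;M : \Tree I {(\Tree I {\Set})}$.

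First I would unfold the left-hand side. By the satisfiability clause for $\obsalpha$ and then Definition~\ref{def:ind-alg}, we have
\[
  \mu\;M \models \obsalpha\;o\;\phi \;=\; \liftTree{o}{f}{(\mu\;M)} \;=\; \alpha\;o\;(\mathsf{mapTree}\;f\;(\mu\;M)).
\]
Rewriting along the naturality of the monad multiplication, $\mathsf{mapTree}\;f\;(\mu\;d') \equiv \mu\;(\mathsf{mapTree}\;(\mathsf{mapTree}\;f)\;d')$ (a standard monad law), turns the argument of $\alpha\;o$ into $\mu\;d$. Applying the $\alpha$-decomposition hypothesis to the double tree $d$ then yields
\[
  \mu\;M \models \obsalpha\;o\;\phi \iff \liftTest{(\lambda (o_0,o_1).\,\liftTree{o_0}{(\alpha\;o_1)}{d})}{(\pid\;o)}.
\]

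Next I would unfold the right-hand side. By the definition of $(\pid\;o)_\alpha[\phi]$, the satisfiability clause for $\test$, and functoriality of $\liftTestS$ along $\mapTestS$ (i.e.\ $\liftTest{g}{(\mapTest{h}{t})} = \liftTest{(g \circ h)}{t}$), we get
\[
  M \models (\pid\;o)_\alpha[\phi] \;=\; \liftTest{(\lambda (o_1,o_2).\, M \models \obsalpha\;o_1\;(\thunk\;\obsalpha\;o_2\;\phi))}{(\pid\;o)}.
\]
For a fixed pair $(o_1,o_2)$, unfolding the clauses for $\obsalpha$ and $\thunk$ gives $M \models \obsalpha\;o_1\;(\thunk\;\obsalpha\;o_2\;\phi) = \alpha\;o_1\;(\mathsf{mapTree}\;(\lambda W.\,\alpha\;o_2\;(\mathsf{mapTree}\;f\;W))\;M)$, and by functoriality of $\mathsf{mapTree}$ the inner argument collapses to $\mathsf{mapTree}\;(\alpha\;o_2)\;d$, so that this equals $\liftTree{o_1}{(\alpha\;o_2)}{d}$. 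Hence $M \models (\pid\;o)_\alpha[\phi] = \liftTest{(\lambda (o_1,o_2).\,\liftTree{o_1}{(\alpha\;o_2)}{d})}{(\pid\;o)}$, which, after renaming the bound variables, is exactly the right-hand side of the decomposition equivalence derived above. Chaining the computed equalities with that equivalence yields the required biconditional.

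The argument is therefore bookkeeping together with rewriting along three standard facts: functoriality of $\mathsf{mapTree}$, functoriality of $\liftTestS$ along $\mapTestS$, and naturality of $\mu$. I expect the last of these to be the only ingredient that is not a definitional unfolding; in the coinductive setting it is established by coinduction (a sized corecursion in the Agda development), and it does not mention $\alpha$ or $\beta$. Note also that every step other than the single use of the decomposition hypothesis is an equality, so the ``if and only if'' in the statement is inherited entirely from the decomposition being stated as a biconditional. The $\beta$-case runs identically \emph{mutatis mutandis}: the coinductive nature of $\beta$ causes no difficulty, since the proof performs no induction or coinduction on $M$ itself.
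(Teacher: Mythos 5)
Your proof is correct and is exactly the unfolding argument the lemma is meant to be proved by (the paper prints no proof, deferring to the Agda formalization): read $M : \Cpt\;(\U\;\tau)$ as you note, reduce both sides to $\liftTest{(\lambda (o_0,o_1).\,\liftTree{o_0}{(\alpha\;o_1)}{d})}{(\pid\;o)}$ for $d = \mapTree{(\mapTree{f}{-})}{M}$ with $f = \lambda V.\,(V \models \phi)$, and invoke the decomposition hypothesis once. The only step you slightly understate is the use of naturality of $\mu$: on coinductive trees this holds up to bisimilarity rather than propositional equality, so transporting the decomposition hypothesis additionally requires that $\alpha\;o$ and $\beta\;o$ are invariant under tree bisimilarity (or a sized encoding in which the two trees agree definitionally), which the formalization supplies but which is not a mere ``standard monad law'' rewrite.
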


%

We present instances of $\alpha$ and $\beta$-decompositions for our examples, proving they are strongly decomposable.
In each instance, the $\beta$-decomposition is given by the test dual to the $\alpha$-decomposition.

\begin{example}\label{ex:pure-dec}
	Consider Example \ref{ex:pure1} concerning pure computations using an undetectable $\skipo$ operation.
	Both the $\alpha$- and $\beta$-decomposition can be given by $\pid \; {\downarrow} = \atom{({\downarrow}, {\downarrow})}$ signifying that
	a sequenced program terminates if and only if the unsequenced program terminates and returns a terminating program.
	
	In the case that the number of $\skipo$'s can be measured, we give the $\alpha$-decomposition as 
	
	\begin{center}
		$\pid \; n = \bigvee \left( \lambda m. \; \bigvee \left(\lambda k. \; \text{if} \; (m + k \leq n) \; \text{then} \; \atom{(m, k)} \; \text{else} \; \false\right)\right)$.
	\end{center}
	
	\noindent
	This states that a sequenced program terminates within $n$ steps, if the unsequenced program returns within $m$ steps a program which itself terminates within $k$ steps, such that the total number of steps $m + k$ is at most $n$.
	The $\beta$-decomposition $\pid'$ is given by the dual:
	\begin{center}
	$\pid' \; n = \bigwedge \left( \lambda m. \; \bigwedge \left(\lambda k. \; \text{if} \; (m + k \leq n) \; \text{then} \; \atom{(m, k)} \; \text{else} \; \true\right)\right)$.
	\end{center}
\end{example}

\begin{example}\label{ex:nond-dec}
	For nondeterministic computations following Example \ref{ex:nond1},
	both $\alpha$- and $\beta$-decompositions are given by $\lambda o. \; \atom{(o, o)}$. E.g., a sequenced program may terminate if the unsequenced program may return a program which may terminate.
	
\end{example}

\begin{example}\label{ex:glob-dec}
	For global store computations from Example \ref{ex:glob1},
	the decompositions check for the existence of some intermediate value which specifies the state after the first evaluation and before the second evaluation of a double tree.
	The $\alpha$-decomposition is given by
	$\pid \; (n, m) = \bigvee \left(\lambda k. \; \atom{((n, k), (k, m))}\right)$.
	Intuitively, this states that a sequenced program goes from state $n$ to terminating with state $m$, if and only if an unsequenced program goes from $n$ to some intermediate state $k$, and evaluating the returned program with state $k$ yields termination with state $m$.
	The $\beta$-decomposition is given by the dual $\pid' \; (n, m) = \bigwedge \left(\lambda k. \; \atom{((n, k), (k, m))}\right)$.
\end{example}

\begin{example}\label{ex:inpu-dec}
	The decomposition of input observations from Example \ref{ex:inpu1} relies on the enumerability of the set of bitlists $2^*$ and the decidability of its propositional equality. 
	Notice that countable conjunction $\bigwedge$ and disjunction $\bigvee$ in $\Test$ are both indexed by natural numbers. But, due to to the enumerability of bitlists, they can also be indexed by $2^*$.
	We define an $\alpha$-decomposition $\pid : O \to \Test{(O \times O)}$ for the input example, whose dual is a $\beta$-decomposition. This function splits the bitlist $l$ into a bitlist $x$ before the first termination, and a bitlist $y$ for testing the program after the first termination.
	
	\begin{center}
		$   
		\arraycolsep=1pt
		\begin{array}{rcl}  
		\pid \; (\leftbool, l) &=& \bigvee \left( \lambda x , y : 2^*. \; \text{if} \; (x{+}\!\!\!{+} y \equiv l) \; \text{then} \; \atom{((\leftbool, x), (\leftbool, y))} \; \text{else} \; \false \right) \\	
		\pid \; (\rightbool, l) &=& \atom{((\rightbool, l), (\rightbool, l))} \;\vee \\
		& & \bigvee \left(\lambda x , y : 2^*. \; \text{if} \; (x{+}\!\!\!{+} y \equiv l) \; \text{then} \; \atom{((\leftbool, x), (\rightbool, y))} \; \text{else} \; \false\right)
		\end{array}
		$
	\end{center}
\end{example}

\noindent
\textbf{Higher-order Nondeterminism.} 
We study some example of nondeterministic programs.
Firstly, for any type $\tau$, we have the diverging computation $\Omega_\tau : \Cpt \; \tau$ defined corecursively as the program satisfying the following equation: $\Omega_\tau = \node {\Or} {(\lambda \,\leftbool.\;\Omega_\tau, \lambda \,\rightbool.\;\Omega_\tau)}$. For readability's sake, we simply write $\Or$ in place of $\nodeS\;\Or$, and we avoid writing $\lambda \,\leftbool$ and $\lambda \,\rightbool$, since it is clear that the left argument corresponds to the left branch of the tree, similarly for the right case.
It can be shown that the $\Omega_\tau$ computation \emph{must diverge}, satisfying the formula $\mathsf{mustdiv} = \negForm{(\obsalpha \; \may \; (\test \; \true))}$, which is equal to $\obsbeta \; \may \; (\test \; \false)$.
Note the use of the formula $\test \; \false$ which cannot be satisfied, hence $\mathsf{mustdiv}$ cannot be satisfied by a computation that can produce a result.

As a case study of properties of higher-order programs, we consider the following two computations of type $\U \;(\U \;\tau)$ for some type $\tau$, a variation on an example from \cite{Lassen,Ong}: Take the program
$ P = \Or \;(\leaf{(\Omega_{\U \tau})}, \; \leaf{(\leaf{\Omega_\tau})})$ and $Q= \leaf{(\Or \;(\Omega_{\U \tau}, \; \leaf{\Omega_\tau}))} $.
We can show that $P$ may produce a result that must diverge, and hence it satisfies $\phi = \obsalpha \; \may \; (\thunk \;\mathsf{mustdiv})$.
On the other hand, all programs which $Q$ returns may terminate, hence $Q$ satisfies $\negForm{\phi} = \obsbeta \; \may \; (\thunk \; (\obsalpha \;\may \; (\test \; \true)))$.
Hence, $P$ and $Q$ are not related via the logical equivalence.
Note moreover that $\phi$ only uses the $\may$ observation.
Via the connection to applicative bisimilarity shown in the next section, we are able to reproduce Ong's result that these two terms are not applicatively bisimilar~\cite{Ong}. See \cite{VoorneveldThesis} for more classical examples of properties for higher-order effectful programs.

\section{Simulations and Equality}

The logic from section \ref{sec:logic} is capable of expressing a behavioural or observational difference between two programs.
For two terms $P, Q : \Tm \; b \; \tau$, such a difference is a formula $\phi : \Formo{b}{\tau}$ such that $P \models \phi$ and $Q \models \negForm{\phi}$.
This shows that $P \not\equiv_\Tm Q$.
Showing that two programs have no behavioural differences is not trivial.
One strategy for proving behavioural equivalence is using \emph{applicative simulations}~\cite{Abramsky90}.

Applicative simulations, or more generally just simulations, are relations on program terms which prove \emph{similarity} between them.
If two programs are related by a simulation they are \emph{similar}, and if they are related by a symmetric simulation, also called a \emph{bisimulation}, then they are \emph{bisimilar}.
In this section we define these notions, and show how they relate to our logically induced behavioural equivalence.

In the context of algebraic effects, we need to specify how to compare the effect operations of two programs.
This is done in \cite{Relational} using \emph{relators} \cite{Levy11}, which lift relations between types $X$ and $Y$ to relations between the monadic liftings $\Tree{I}{X}$ and $\Tree{I}{Y}$.
In this paper, we use a simpler variation of the notion of relator considering only homogeneous relations.
Given some relation $\mathcal{R} : A \to A \to \Set$, a predicate $f : A \to \Set$ is considered \emph{$\mathcal{R}$-correct}
if there is a function of type $(a \; b : A) \to f \; a \to \mathcal{R} \; a \; b \to f \; b$.

\begin{definition}\label{def:our-rel-lift}
	Given $(O, \pileafS, \pinodeS)$ and $\mathcal{R} : A \to A \to \Set$, we define a new relation $\Gamma(\mathcal{R}) : \Tree{I}{A} \to \Tree{I}{A} \to \Set$ such that $\Gamma(\mathcal{R}) \; t_0 \; t_1$ holds if and only if, for any $\mathcal{R}$-correct predicate $f$ and observation $o:O$, we have the following two implications:\quad
	$\liftTree{o}{f}{t_0} \to \liftTree{o}{f}{t_1}$ \ \ and \ \      
	$\coliftTree{o}{f}{t_0} \to \coliftTree{o}{f}{t_1}$ .
\end{definition}
The original definition of a relator using modalities involves a feedback loop between predicates and relations, which necessarily increases the universe level when describing higher order types. This is problematic when trying to show preservation over sequencing. The above variation avoids this problem, at the cost of compositionality, but without sacrificing transitivity of the resulting notions of program equivalence.

The relation lifting satisfies some relator properties, such as monotonicity with respect to the relation order $\mathcal{R} \subseteq \mathcal{S}$.
Moreover, the resulting relation behaves well with respect to sequencing, as seen below:

\begin{proposition}\label{prop:relseq}
	If $(O, \pileafS, \pinodeS)$ is strongly decomposable, then for any relation ${\mathcal{R} : A \to A \to \Set}$ and any two double trees $d_0, d_1 : \Tree{I}{(\Tree{I}{A})}$, if
	$\Gamma(\Gamma(\mathcal{R})) \; d_0 \; d_1$ then $\Gamma(\mathcal{R}) \; (\mu \; d_0) \; (\mu \; d_1)$.
\end{proposition}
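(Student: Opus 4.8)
The plan is to unfold the goal $\Gamma(\mathcal R)\;(\mu\;d_0)\;(\mu\;d_1)$ via Definition~\ref{def:our-rel-lift}, strip off the outer $\mu$ using naturality of the monad multiplication, reduce the resulting single-tree inequality to the double-tree inequality $d_0' \sqsubseteq d_1'$ by invoking the strong decomposability hypothesis, and finally derive $d_0' \sqsubseteq d_1'$ by feeding the hypothesis $\Gamma(\Gamma(\mathcal R))\;d_0\;d_1$ the right family of $\Gamma(\mathcal R)$-correct predicates. There is no genuine coinductive argument left at this level: all the coinduction is packaged inside $\coliftTreeS$, inside the relator $\Gamma$, and inside the (assumed) proof of strong decomposability.

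Concretely, I would first fix an $\mathcal R$-correct predicate $f : A \to \Set$ and an observation $o : O$; by Definition~\ref{def:our-rel-lift} it then suffices to produce the two implications $\liftTree{o}{f}{(\mu\;d_0)} \to \liftTree{o}{f}{(\mu\;d_1)}$ and $\coliftTree{o}{f}{(\mu\;d_0)} \to \coliftTree{o}{f}{(\mu\;d_1)}$. Using naturality of $\mu$ (a routine monad law, proved by coinduction in the formalization) together with functoriality of $\mathsf{mapTree}$, I would rewrite $\mathsf{mapTree}\;f\;(\mu\;d_i) = \mu\;(\mathsf{mapTree}\;(\mathsf{mapTree}\;f)\;d_i)$ and set $d_i' := \mathsf{mapTree}\;(\mathsf{mapTree}\;f)\;d_i : \Tree{I}{(\Tree{I}{\Set})}$. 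Then $\liftTree{o}{f}{(\mu\;d_i)}$ becomes $\alpha\;o\;(\mu\;d_i')$ and $\coliftTree{o}{f}{(\mu\;d_i)}$ becomes $\beta\;o\;(\mu\;d_i')$, so the goal reduces to $\mu\;d_0' \sqsubseteq \mu\;d_1'$. Since $(O,\pileafS,\pinodeS)$ is strongly decomposable, it is now enough to establish $d_0' \sqsubseteq d_1'$.

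For $d_0' \sqsubseteq d_1'$, fix $o_0, o_1 : O$. By functoriality, $\mathsf{mapTree}\;(\alpha\;o_1)\;d_i' = \mathsf{mapTree}\;(\alpha\;o_1 \circ \mathsf{mapTree}\;f)\;d_i$, and the composite $\alpha\;o_1 \circ \mathsf{mapTree}\;f$ is exactly the predicate $g_{o_1} := \lambda t.\;\liftTree{o_1}{f}{t} : \Tree I A \to \Set$; hence $\liftTree{o_0}{(\alpha\;o_1)}{d_i'} = \liftTree{o_0}{g_{o_1}}{d_i}$. Symmetrically, $\coliftTree{o_0}{(\beta\;o_1)}{d_i'} = \coliftTree{o_0}{h_{o_1}}{d_i}$ with $h_{o_1} := \lambda t.\;\coliftTree{o_1}{f}{t}$. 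So the two implications required for $d_0' \sqsubseteq d_1'$ are $\liftTree{o_0}{g_{o_1}}{d_0} \to \liftTree{o_0}{g_{o_1}}{d_1}$ and $\coliftTree{o_0}{h_{o_1}}{d_0} \to \coliftTree{o_0}{h_{o_1}}{d_1}$, which are instances of $\Gamma(\Gamma(\mathcal R))\;d_0\;d_1$ applied to the predicates $g_{o_1}$ and $h_{o_1}$ respectively, provided these are $\Gamma(\mathcal R)$-correct. But $\Gamma(\mathcal R)$-correctness of $g_{o_1}$ — that $\Gamma(\mathcal R)\;t\;t'$ and $\liftTree{o_1}{f}{t}$ together entail $\liftTree{o_1}{f}{t'}$ — is precisely the first implication in Definition~\ref{def:our-rel-lift} applied to the $\mathcal R$-correct predicate $f$ at the observation $o_1$; likewise, $\Gamma(\mathcal R)$-correctness of $h_{o_1}$ is the second implication there. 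Both hold, which closes the argument.

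The content of the proof is essentially this bookkeeping, and the only subtlety I would flag is the following: the inductive half of the goal consumes the predicate $g_{o_1} = \liftTree{o_1}{f}{-}$ while the coinductive half consumes the \emph{a priori different} predicate $h_{o_1} = \coliftTree{o_1}{f}{-}$, which is exactly why Definition~\ref{def:our-rel-lift} is designed to carry both an inductive and a coinductive clause; and passing from the double-tree order $\sqsubseteq$ on $d_i'$ to the single-tree order $\sqsubseteq$ on $\mu\;d_i'$ is precisely the point at which the strong decomposability hypothesis is used (and where it cannot be avoided). One should also be mild about universes — nesting $\Gamma$ is harmless here thanks to the homogeneous-relation variant of the relator — but no real difficulty arises.
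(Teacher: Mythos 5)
Your proof is correct, and it follows the route the paper intends: the paper states Proposition~\ref{prop:relseq} without a written proof (deferring to the Agda formalization), and the argument there is exactly this reduction --- unfold $\Gamma$, rewrite $\mapTree{f}{(\mu\;d_i)}$ as $\mu\;(\mapTree{(\mapTree{f})}{d_i})$, apply strong decomposability to pass from $d_0' \sqsubseteq d_1'$ to $\mu\;d_0' \sqsubseteq \mu\;d_1'$, and discharge $d_0' \sqsubseteq d_1'$ by feeding $\Gamma(\Gamma(\mathcal{R}))\;d_0\;d_1$ the predicates $\liftTree{o_1}{f}{(-)}$ and $\coliftTree{o_1}{f}{(-)}$, whose $\Gamma(\mathcal{R})$-correctness is immediate from the two clauses of Definition~\ref{def:our-rel-lift}. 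The only point worth making explicit is that in the intensional Agda setting the naturality square for $\mu$ holds only up to bisimilarity of trees, so one additionally needs (or builds in) the fact that $\alpha\;o$ and $\beta\;o$ respect tree bisimilarity; at the paper's informal level of treating coinductive types this is harmless.
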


The relation lifting $\Gamma$ is unfortunately not \emph{compositional}, since it does not preserve relation composition.
Despite this fact though, we are still able to show transitivity of the resulting notions of program relation, as we will see in Proposition \ref{prop:sim-trans}.

Using the relation lifting $\Gamma$, we define a notion of similarity between programs.
A \emph{well-typed} relation $\overline{\mathcal{R}}$ is a collection of relations on syntactic terms indexed by sorts and types:
$\overline{\mathcal{R}} : {(b : \{\val, \cpt\})} \,(\tau : \Ty) \to \Tm \; b \; \tau \to \Tm \; b \; \tau \to \Set$.

\begin{definition}
	A well-typed relation $\overline{\mathcal{R}}$ is an \emph{applicative $\Gamma$-simulation} when:
	\begin{itemize}
		\item If \, $\overline{\mathcal{R}} \; \val \; \Nat \; n \; m$, then $n \equiv m$.
		\item If \, $\overline{\mathcal{R}} \; \val \; (\sigma \tto \tau) \; V \; W$, then, for any $U : \Val \; \sigma$,  \, $\overline{\mathcal{R}} \; \cpt \; \tau \; (V\,U) \; (W\,U)$.
		\item If \, $\overline{\mathcal{R}} \; \val \; (\sigma \otimes \tau) \; (V_0, V_1) \; (W_0, W_1)$, \, then $\overline{\mathcal{R}} \; \val \; \sigma \; V_0 \; W_0$ and $\overline{\mathcal{R}} \; \val \; \tau \; V_1 \; W_1$.
		\item If \, $\overline{\mathcal{R}} \; \val \; (\U \; \tau) \; V \; W$, then $\overline{\mathcal{R}} \; \cpt \; \tau \; V \; W$.
		\item If \, $\overline{\mathcal{R}} \; \cpt \; \tau \; M \; N$, then $\Gamma(\overline{\mathcal{R}} \; \val \; \tau) \; M \; N$.
	\end{itemize}
\end{definition}

\begin{definition}
	For any sort $b : \{\val, \cpt\}$ and type $\tau : \Ty$, we call two terms $P, Q : \Tm \; b \; \tau$ \emph{applicatively $\Gamma$-similar} if there is a $\Gamma$-simulation $\overline{\mathcal{R}}$ such that $\overline{\mathcal{R}} \; b \; \tau \; P \; Q$ holds.
	Two terms $P, Q : \Tm \; b \; \tau$ are called \emph{applicatively $\Gamma$-bisimilar} if there is a symmetric $\Gamma$-simulation $\overline{\mathcal{R}}$ such that $\overline{\mathcal{R}} \; b \; \tau \; P \; Q$ holds.
\end{definition}

Importantly, the notions of similarity and bisimilarity give us proof techniques for showing that there are no observable differences between programs.

\begin{proposition}
	Given two programs $P$ and $Q$ of the same type, then:
	\begin{itemize}
		\item If $P$ and $Q$ are applicatively $\Gamma$-similar, then $P \leq_\Tm Q$.
		\item If $P$ and $Q$ are applicatively $\Gamma$-bisimilar, then $P \equiv_\Tm Q$.
	\end{itemize} 
\end{proposition}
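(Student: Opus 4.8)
The plan is to prove the two claims by exhibiting a single sufficient relation and showing it is a $\Gamma$-simulation (resp.\ symmetric $\Gamma$-simulation), then reading off the two inequalities. Concretely, given applicatively $\Gamma$-similar terms $P$ and $Q$ witnessed by a $\Gamma$-simulation $\overline{\mathcal{R}}$, I would show by induction on the formula $\phi : \Formo b \tau$ that $\overline{\mathcal{R}}\; b\; \tau\; P\; Q$ and $P \models \phi$ together imply $Q \models \phi$; this is exactly $P \le_\Tm Q$. For bisimilarity, one applies the similarity statement both to $\overline{\mathcal{R}}$ and to its converse $\overline{\mathcal{R}}^{\mathrm{op}}$ (which is again a $\Gamma$-simulation when $\overline{\mathcal{R}}$ is symmetric), yielding $P \le_\Tm Q$ and $Q \le_\Tm P$, hence $P \equiv_\Tm Q$.

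The induction on $\phi$ would proceed clause by clause following the definition of $\models$. The base cases $\eq\;n$, $\neqF\;n$ use the first clause of a $\Gamma$-simulation ($\overline{\mathcal{R}}\;\val\;\N\;n\;m$ forces $n \equiv m$). The cases $\lef\;\phi$, $\rig\;\phi$, $\thunk\;\phi$, $V \mapsto \phi$ are handled by the corresponding structural clauses of the simulation definition together with the inductive hypothesis; for $V \mapsto \phi$ one instantiates the function-case clause at the argument $V$. The $\test\;\phi s$ case follows because $\liftTestS$ is functorial and monotone in its predicate argument: if $P \models \phi \Rightarrow Q \models \phi$ for every sub-formula $\phi$ occurring in $\phi s$ (inductive hypothesis), then $\liftTest{(\lambda\phi.\,P\models\phi)}{\phi s}$ implies $\liftTest{(\lambda\phi.\,Q\models\phi)}{\phi s}$. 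The crucial cases are the modalities $\obsalpha\;o\;\phi$ and $\obsbeta\;o\;\phi$. Here one uses the last clause of the simulation definition: from $\overline{\mathcal{R}}\;\cpt\;\tau\;M\;N$ one has $\Gamma(\overline{\mathcal{R}}\;\val\;\tau)\;M\;N$. By the inductive hypothesis the predicate $f = (\lambda V.\; V \models \phi)$ is $(\overline{\mathcal{R}}\;\val\;\tau)$-correct, so the definition of $\Gamma$ (Definition~\ref{def:our-rel-lift}) delivers both $\liftTree{o}{f}{M} \to \liftTree{o}{f}{N}$ and $\coliftTree{o}{f}{M} \to \coliftTree{o}{f}{N}$, which is precisely what is needed for $\obsalpha\;o\;\phi$ and $\obsbeta\;o\;\phi$ respectively.

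The main obstacle is the modality case, and specifically verifying the $\mathcal{R}$-correctness hypothesis that $\Gamma$ demands: the inductive hypothesis gives, for the fixed $P,Q$ at hand, an implication $P\models\phi \to Q\models\phi$, whereas $\Gamma$ needs $\overline{\mathcal{R}}$-correctness of $f$ uniformly, i.e.\ $(V\;W : \Val\;\tau) \to f\;V \to \overline{\mathcal{R}}\;\val\;\tau\;V\;W \to f\;W$. To bridge this, the induction should be strengthened: rather than fixing $P$ and $Q$, prove the statement ``for all $b,\tau$ and all $P,Q$ with $\overline{\mathcal{R}}\;b\;\tau\;P\;Q$, for all $\phi : \Formo b \tau$, $P \models \phi \to Q \models \phi$'' as a single statement quantified over all related pairs simultaneously, so that the inductive hypothesis applied at the value level is exactly the required correctness condition. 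With that formulation the modality step goes through directly, and the remaining cases are routine structural bookkeeping. One should also note that no well-foundedness subtlety arises from $\obsbeta$ being coinductive: the recursion is on the \emph{formula} $\phi$, which is an inductive datatype, while $\Gamma$ already packages the coinductive reasoning about $\coliftTreeS{}$ internally, so the induction on $\phi$ remains well-founded.
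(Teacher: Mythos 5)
Your proposal is correct and follows essentially the same route as the paper: induction on the formula $\phi$, with the statement quantified over all $\overline{\mathcal{R}}$-related pairs so that the inductive hypothesis at value type $\tau$ yields $(\overline{\mathcal{R}}\;\val\;\tau)$-correctness of $\lambda V.\,V\models\psi$, which feeds into the $\Gamma$ clause of the simulation for the $\obsalpha$ and $\obsbeta$ cases. Your explicit remarks about strengthening the induction and about why no coinductive well-foundedness issue arises simply spell out what the paper's proof leaves implicit.
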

\begin{proof}
	Given an applicative $\Gamma$-simulation $\overline{\mathcal{R}}$, we show by induction on formulae $\phi$, that for any $P$ and $Q$ related by $\overline{R}$, $P \models \phi$ implies $Q \models \phi$. If this holds, we say $\overline{\mathcal{R}}$ preserves $\phi$.
	
	Most cases are straightforward. Most difficult are the instances where $\phi = \obsalpha \; o \; \psi$ and $\phi = \obsbeta \; o \; \psi$ of some type $\tau$.
	In these cases, we use the induction hypothesis on $\psi$ to show $\overline{\mathcal{R}}$ preserves $\psi$.
	Hence $\lambda V. (V \models \psi)$ is $(\overline{\mathcal{R}} \; \val \; \tau)$-correct, and the result can be derived from the appropriate simulation condition.
\end{proof}

It seems however impossible to derive implications in the opposite direction. It appears that bisimilarity and logical equivalence as stated in this paper are not the same,
as opposed to the classical variants defined by Simpson and Voorneveld \cite{modal-journal}.
This has to do with the limitations of explicit enumeration.

The variant formulation of relator in Definition
\ref{def:our-rel-lift} was chosen such that it yields program relations classically similar to the original ones,
yet allow us to constructively prove Propositions~\ref{prop:sim-trans} and~\ref{prop:relseq}.

\begin{proposition}\label{prop:sim-trans}
	If $P$ is similar to $Q$, and $Q$ is similar to $R$, then $P$ is similar to $R$.
\end{proposition}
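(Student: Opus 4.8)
The naive attempt -- showing that the relational composite of two witnessing $\Gamma$-simulations is again a $\Gamma$-simulation -- is precisely where the failure of compositionality of $\Gamma$ bites, since its computation clause would require a property $\Gamma(\overline{\mathcal R}\;\val\;\tau)\mathbin{;}\Gamma(\overline{\mathcal S}\;\val\;\tau)\subseteq\Gamma((\overline{\mathcal R}\mathbin{;}\overline{\mathcal S})\;\val\;\tau)$, which does not hold (as remarked just before the proposition). The plan is instead to use the \emph{transitive closure}. Given a $\Gamma$-simulation $\overline{\mathcal R}$ relating $P$ to $Q$ and a $\Gamma$-simulation $\overline{\mathcal S}$ relating $Q$ to $R$, I would form the pointwise union $\overline{\mathcal U}:=\overline{\mathcal R}\cup\overline{\mathcal S}$ and its transitive closure $\overline{\mathcal T}$, and prove that $\overline{\mathcal T}$ is a $\Gamma$-simulation. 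Since $\overline{\mathcal R}$ relates $P$ to $Q$ and $\overline{\mathcal S}$ relates $Q$ to $R$, the pair $P,R$ is connected by a length-two $\overline{\mathcal U}$-chain, so $\overline{\mathcal T}$ relates $P$ and $R$, which is the desired similarity.

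Two preparatory facts carry the argument. First, $\overline{\mathcal U}$ is itself a $\Gamma$-simulation: every clause of the definition is preserved by unions of relations -- the value clauses because they only assert consequences of a relating pair, and the computation clause because $\Gamma$ is monotone in its relation argument (as noted after Definition~\ref{def:our-rel-lift}), so $\Gamma(\overline{\mathcal R}\;\val\;\tau)\;M\;N$ together with $\overline{\mathcal R}\;\val\;\tau\subseteq\overline{\mathcal U}\;\val\;\tau$ yields $\Gamma(\overline{\mathcal U}\;\val\;\tau)\;M\;N$. Second -- and this is the observation that circumvents non-compositionality of $\Gamma$ -- since $\overline{\mathcal U}$ is pointwise contained in its transitive closure $\overline{\mathcal T}$, every $(\overline{\mathcal T}\;\val\;\tau)$-correct predicate is automatically $(\overline{\mathcal U}\;\val\;\tau)$-correct.

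I would then verify that $\overline{\mathcal T}$ is a $\Gamma$-simulation by chasing witnessing finite $\overline{\mathcal U}$-chains (chains in which each consecutive pair is related by $\overline{\mathcal U}$). The value clauses are routine: a chain at $\N$ collapses by transitivity of ${\equiv}$; at $\sigma\tto\rho$ one feeds the argument pointwise along the chain; at $\sigma\otimes\rho$ one splits the chain componentwise; at $\U\;\tau$ a chain at sort $\val$ becomes, step by step via the simulation clauses of $\overline{\mathcal U}$, a chain at sort $\cpt$. The only case with real content is the computation clause: for a chain $M=M_0,M_1,\dots,M_n=N$ at sort $\cpt$ and type $\tau$, the first fact gives $\Gamma(\overline{\mathcal U}\;\val\;\tau)\;M_i\;M_{i+1}$ for each $i$; to obtain $\Gamma(\overline{\mathcal T}\;\val\;\tau)\;M\;N$, take an arbitrary $(\overline{\mathcal T}\;\val\;\tau)$-correct predicate $f$ and observation $o:O$, note by the second fact that $f$ is $(\overline{\mathcal U}\;\val\;\tau)$-correct, so each link yields $\liftTree{o}{f}{M_i}\to\liftTree{o}{f}{M_{i+1}}$ and $\coliftTree{o}{f}{M_i}\to\coliftTree{o}{f}{M_{i+1}}$, and composing the $n$ implications transports $\liftTreeS{o}$ and $\coliftTreeS{o}$ from $M$ to $N$. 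Thus $\overline{\mathcal T}$ is a $\Gamma$-simulation relating $P$ and $R$, so $P$ and $R$ are applicatively $\Gamma$-similar.

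The hard part is this computation clause: the relational-composition proof breaks because $\Gamma$ does not preserve composition, and the remedy is to reason one link at a time, exploiting that a predicate which is correct for the \emph{larger} relation $\overline{\mathcal T}$ is a fortiori correct for the \emph{smaller} relation $\overline{\mathcal U}$ occurring in each individual $\Gamma$-step. A minor point to check is that finite composition of implications works uniformly for the inductively defined $\liftTreeS$ and the coinductively defined $\coliftTreeS$, which is unproblematic since at each link we only reuse an already-established implication. Equivalently, one may take $\overline{\mathcal T}$ to be the transitive closure of similarity itself, once one observes that similarity, being the union of all $\Gamma$-simulations, is a $\Gamma$-simulation by the same monotonicity argument.
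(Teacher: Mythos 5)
Your proposal is correct and follows essentially the same route as the paper: first show the union of the two witnessing simulations is a simulation, then show its (reflexive-)transitive closure is a simulation, and conclude by relating $P$ and $R$ through the length-two chain via $Q$. The extra detail you supply for the computation clause --- that a $(\overline{\mathcal T}\;\val\;\tau)$-correct predicate is a fortiori $(\overline{\mathcal U}\;\val\;\tau)$-correct, so the implications compose link by link despite $\Gamma$ not preserving relational composition --- is exactly the point that makes the paper's closure argument go through.
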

\begin{proof}
	The proof follows ideas from \cite{opsem}, showing first that the union of two simulations is again a simulation.
	We then show that the reflexive-transitive closure $\overline{\mathcal{R}}^*$ of a simulation is a simulation too.
	Then, if $\overline{\mathcal{R}} \; P\;Q$ and $\overline{\mathcal{S}} \; Q\;T$, then $(\overline{\mathcal{R}} \cup \overline{\mathcal{S}})^* \; P\;T$, hence they are similar.
\end{proof}

For particular programming languages,
such as Plotkin's PCF \cite{PCF}, it can be classically proven that the resulting notion of $\Gamma$-similarity and $\Gamma$-bisimilarity is a \emph{precongruence} and \emph{congruence} respectively.
These kinds of proofs typically follow \emph{Howe's method} \cite{Howe96}. 
This is a desirable property, since any two programs related by a congruence are \emph{contextually equivalent}.
Moreover, in \cite{Matache19,MatacheS19} it is shown that for a continuation-passing style language, such a program equivalence coincides with contextual equivalence.

We can use the predicate liftings and constructive logic to study programming languages embedded into Agda.
Whether or not the resulting program equivalence is a congruence 
is dependent on the operations of the programming language.
We try to approach such congruence results by studying common operations, such as sequencing and function application, and prove preservation over such operations in general. 
We leave to future work the question of whether the traditional methods such as Howe's method can be formalized and extended to work for embeddings of languages such as PCF into our generic programming language of denotations.

Program equivalences like applicative bisimilarity and logical equivalences are useful in regards to both applicability and formalization.
They reduce the burden of proof for showing equivalence of programs to either finding a bisimulation that relates the programs, or testing for behavioural properties at the appropriate type.

\section{Conclusion and Future Work}

In our development we have been working with a generic notion of
program given by elements of type $\Tm\;b\;\sigma$, for some label $b
: \{\val,\cpt \}$ and syntactic type $\sigma : \Ty$. These kind of
objects typically arise from the interpretation of effectful recursive
programs in denotational domain-theoretic models, or, alternatively,
from program evaluation w.r.t. some particular evaluation strategy. Our
Agda formalization includes the implementation of two different object
programming languages: fine-grained call-by-value PCF \cite{PCF,fine-grained} and an automaton-like state machine. Programs in
these languages can potentially be modelled as terms of our generic
language, and their behaviour analysed using the techniques developed
in this paper. 
Our Agda code includes the interpretation of automata and PCF programs 
as coinductive trees. The full embedding of such computational constructs 
into our generic programming language of denotations is left for future work.
For the latter, we should take
inspiration from Benton et al.'s formalization of domain theory in Coq
\cite{BentonKV09} and Paviotti et al.'s full abstraction result for PCF in
guarded type theory \cite{PaviottiMB15}.

The type of coinductive trees we employ naturally captures an
``intensional'' view of possibly non-terminating computations, in the
sense that the user of a program may be allowed to observe the number
of computation steps. If this is the case, programs returning the same
values but with different computation time are considered different.
One might wish to change the encoding of programs and move to an
``extensional'' view of termination, so that intensional aspects
of computations such as computation time become unobservable.
We foresee a possible modification of the type of
trees involving the presence of the partiality
monad~\cite{ADK:parrpm,CUV:quodmw}. This modification in turn requires
the switch to proof assistants with support for higher 
inductive types such as Cubical Agda~\cite{VMA:cubadt}. 

Many other effects besides the given examples can be described using this formalization too.
One such example is that of probability. It is possible to implement binary probabilistic choice, and define observations with the rational interval $O = [0,1)$ following \cite{modal-journal}. The inductive predicate lifting generated by an observation $p \in O$ checks whether the probability of satisfying a predicate is higher than $p$. The coinductive counterpart instead checks whether satisfaction of the lifted property or divergence together has at least a probability of $(1-p)$.
This example has been implemented, but has not been formally verified to be decomposable, due to the complexity of the necessary mathematical tools.

\bibliographystyle{eptcs}
\bibliography{paper}
\end{document}